\documentclass[review,a4paper,11pt]{article}
\usepackage{amsmath, amsthm, amssymb,enumerate}

\usepackage{natbib}
\usepackage{tikz}
\usetikzlibrary{shapes,arrows,fit,calc,positioning,patterns}


\newtheorem{theorem}{Theorem}
\newtheorem{lemma}{Lemma}
\newtheorem{claim}{Claim}
\newtheorem{prop}{Proposition}

\newtheorem{corollary}{Corollary}

\newcommand{\J}{\mathcal{J}}
\newcommand{\B}{\mathcal{B}}

\newcommand{\lmp}{late work minimization problem}
\newcommand{\rlp}{resource leveling problem}
\newcommand{\emp}{early work maximization problem}

\title{A common approximation framework for the early work, the late work, and  resource leveling problems with unit time jobs\footnote{This work has been supported by the National Research, Development and Innovation Office -- NKFIH, grant no.~SNN~129178, and ED\_18-2-2018-0006.}}

\author{P\'eter Gy\"orgyi and Tam\'as Kis\footnote{Corresponding author.}\\
Institute for Computer Science and Control, Kende str 13-17,\\
 1111 Budapest, Hungary\\
peter.gyorgyi@sztaki.hu and  tamas.kis@sztaki.hu}

\begin{document}
\maketitle

\begin{abstract}
We study the approximability of two related machine scheduling problems.
In the {\em late work minimization problem\/}, there are identical  parallel machines and the jobs have a common due date. The objective is to minimize the {\em late work\/}, defined as the sum of the portion of the jobs done after the due date.
A related problem is the  maximization of the {\em early work\/}, defined as the sum of the portion of the jobs done before the due date.
We describe a polynomial time approximation scheme for the early work maximization problem, and we extended it to the late work minimization problem after shifting the objective function by a positive value that depends on the problem data.
We also prove an  inapproximability result for the latter  problem if the objective function is shifted by a constant which does not depend on the input.
These results remain valid even if the number of the jobs assigned to the same machine is bounded. This  leads to an extension of our approximation scheme to some variants of the resource leveling problem, for which no approximation algorithms were known.
\end{abstract}
\noindent Keywords:
Scheduling; late work minimization; early work maximization; resource leveling; approximation algorithms

\section{Introduction}\label{sec:intro}
Late work minimization, introduced by the pioneering paper of \citet{blazewicz1984scheduling}, is an important area of machine scheduling, for an overview see \citet{sterna2011survey}.
The variant we are going to study in this paper can be briefly stated as follows. We have identical parallel machines and a set of jobs with a common due date. We seek a schedule which minimizes the sum of the portion of the jobs done after the due date.
A strongly related problem is the maximization of the early work, where we have the same data and the objective is to maximize the sum of the portion of the jobs done before the common due date. However, the list of the results for maximizing the early work is much shorter than that for the late work minimization problem, see e.g., \citet{sterna2017polynomial}. 

The applications of the late work optimization criterion range from modeling the loss of information in computational tasks to the measurement of dissatisfaction of the customers  of a manufacturing company.
In particular,  \citet{blazewicz1984scheduling} studies a parallel processor scheduling problem with preemptive jobs where each job processes some samples of data (or measurement points), and if the processing completes after the job's due date, then it causes a loss of information. A natural objective is to minimize the information loss, which is equivalent to the minimization of the  total late work. 
A small flexible manufacturing system is described in \citet{sterna2007late}, where the application of the late work criterion is motivated by the interests of the customers as well as by that of the owner of the system.
The common interest of the customers is to  have the portions of their orders finished after the due date minimized.  In turn, for the owner of the system, the amount of late work is a measure of dissatisfaction of the customers.
As for the early work, we can adapt the same examples considering gain and satisfaction instead of loss and dissatisfaction, respectively.

We have three major sources of motivation for studying the approximability of the early work maximization, and the late work minimization problems:
\begin{enumerate}[i)]
\item \citet{chen2016scheduling} establish the complexity of late work minimization in a parallel machine environment, and then the authors describe  an online algorithm for the early work maximization problem of competitive ratio $\frac{\sqrt{2m^2-2m+1}-1}{m-1}$. However, since the late work can be 0, no approximation or online algorithm is proposed for the late work objective.
\item \citet{sterna2017polynomial} propose a polynomial time approximation scheme for the early work maximization problem with 2 machines, and it is not obvious how to get rid of  some constant bound.

\item We have observed that some variants of the resource leveling problem are equivalent to the early work maximization and the late work minimization problems. Briefly, the resource leveling problems we are referring to consist of a parallel machine environment and one more renewable resource required by a set of  unit time jobs having a common deadline, and one aims at to minimize (maximize) the total resource usage above (below) a threshold. We are not aware of any published approximation algorithms for resource leveling problems in a parallel machine environment, but the results for the early- and late work problems can be transferred to this important subclass.
\end{enumerate}

In this paper we propose a common approximation framework for the early work maximization, the late work minimization and the resource leveling problem in a parallel machine environment with unit time jobs. We emphasize that the number of identical parallel machines is part of the input for all problems studied, and the processing times of the jobs are arbitrary positive integer numbers in the early- and late work problems.

\subsection{Problem formulations and terminology}
In the \lmp~in a parallel machine environment, there is a set $\J$ of $n$ jobs that have to be scheduled on $m$ identical parallel machines. 
If it is not noted otherwise,  the number of the machines is part of the input.
Each job $j \in \J$ has a processing time $p_j$ and there is a common due date $d$.
The  late work objective $Y$ is to minimize the total amount of work scheduled after $d$, see  \citet{chen2016scheduling}.
That is, a schedule $S$ specifies a machine $\mu_j(S) \in \{1,\dots,m\}$ and a starting time $t_j(S)\geq 0$ for each job.
$S$ is {\em feasible\/} if  for each pair of distinct jobs $j$ and $k$ such that $\mu_j(S) = \mu_k(S)$, either $t_j(S)+p_j\leq t_k(S)$ or $t_k(S) + p_k \leq t_j(S)$. Throughout the paper we assume that there are no idle times between the jobs on any machine.
The {\em late work\/} of a schedule $S$ is $Y = \sum_{i=1}^m \max\{0, \sum_{j \in J_i(S)} p_j - d\}$, where $J_i(S) = \{ j\in \J \ |\ \mu_j(S) = i\}$. 
Later we will frequently refer to the sum of the job processing times $p_{sum}:= \sum_{j\in\J} p_j$.
  
We add a further constraint to this problem. We introduce a bound $N$ on the number of the jobs that can be scheduled on any of the machines.  This is called {\em machine capacity\/}, see e.g.~\citet{woeginger2005comment}.
Throughout the paper we assume that $m\cdot N\geq n$, otherwise there is no feasible solution for the problem.
Note that machine capacity  is not a common constraint for the \lmp, but it will be useful later. However, by setting $N = n$, the capacity constraints become void, and we get back the familiar \lmp.

Since the late work objective can be $0$, and deciding whether a feasible schedule of late work 0 exists or not is a strongly NP-hard decision problem (\citet{chen2016scheduling}), no approximation algorithm exists for this objective. However, by applying a standard trick, we can ensure that  the objective function value is always positive, and approximating it becomes possible.
We introduce a problem instance-dependent positive number $T$, and when approximating the optimum late work, we will consider the objective function $T+Y$.

There is another way to modify the objective function so that it allows us to achieve approximation results.
The early work objective $X$, introduced by \cite{blazewicz2005two}, which measures the total amount of work scheduled on the machines before $d$, is closely related to $Y$ by the equation
\begin{equation}
X = p_{sum} - Y. \label{eq:lateXY}
\end{equation} 

In the {\em \rlp\/}, we have $n$ jobs with unit processing times to be scheduled on $m$ identical parallel machines in the interval $[0,C]$, where $C$ is a common deadline of all the jobs. Additionally, there is a renewable resource from which $L$ units is available at any time.
Each job $j$ has a resource requirement $a_j\geq 0$ from the resource. All problem data is integral.
 A {\em schedule\/} $S$ specifies a machine $\mu_j(S) \in \{1,\dots,m\}$ and  {\em starting time\/} $t_j(S) \in\{0,\dots,C-1\}$ for each job $j$.
 Without loss of generality, $m \cdot C \geq n$, otherwise no feasible schedule exists.
 Throughout the paper we assume that in any schedule, if $k < m$ jobs start at some time point $t$, then they occupy the first $k$ machines.
The goal is to find a feasible schedule $S$, where each job starts in $[0,C-1]$ and the total resource requirement above $L$ is minimized, i.e., we have to minimize $\tilde{Y}(S) := \sum_{t=0}^{C-1} \max\{0,\sum_{j\in J_t(S)} a_j - L\}$, where $J_t(S) = \{ j \in \J\ |\ t_j(S) = t\}$.
A closely related problem is the  maximization of the total resource usage below $L$ over the scheduling horizon $[0,C]$, i.e., maximize $\tilde{X}(S) := \sum_{t=0}^{C-1} \min\{L, \sum_{j\in J_t(S)} a_j\}$. Let $a_{sum} := \sum_{j \in \J} a_j$.
The two objective functions are related by the equation
\begin{equation}
\tilde{X} = a_{sum}- \tilde{Y}. \label{eq:levelXY}
\end{equation}
Notice the similarity of (\ref{eq:lateXY}) and (\ref{eq:levelXY}). As we will see, this is not a coincidence.
Furthermore, since checking whether a feasible schedule with $\tilde{Y} = 0$ exists is a strongly NP-hard decision problem (\citet{neumann2000resourceleveling}), for approximating the optimal solution we will use the objective function $\tilde{T} + \tilde{Y}$, where 
$\tilde{T}$ is an instance-dependent positive number. 
If $m\geq n$, then we get  the project scheduling version of the resource leveling problem, i.e., there are no machines and arbitrary number of jobs can be started at the same time.

This paper uses the $\alpha|\beta|\gamma$ notation of \citet{graham1979optimization}, where  $\alpha$ denotes the machine environment, $\beta$ the additional constraints, and $\gamma$ the objective function. 
In the $\alpha$ field we use $P$ for arbitrary number of parallel machines and $P2$ in case of two machines.
In the $\beta$ field, $d_j=d$ indicates that the jobs have a common due date, while $n_i \leq N$ indicates the capacity constraints of the machines.
The symbols $X$ and $Y$ in the $\gamma$ field refer to the early work, and to the late work criterion, respectively, and we use the  symbols $\tilde{X}$ and $\tilde{Y}$ to denote the total resource usage below and above the limit $L$, respectively, in case of the resource leveling problem. 

In this paper we describe approximation algorithms for the above mentioned, and some other combinatorial optimization problems.
Our terminology closely follows that of  \citet{garey1979computers}.
 A {\em minimization\/} (resp.~{\em maximization\/}) problem $\Pi$ is given by a set of instances $\mathcal{I}$, and each instance $I \in \mathcal{I}$ has a set of solutions $\mathcal{S}^I$, and an objective function $c^I : \mathcal{S}^I \rightarrow \mathbb{Q}$. Given any instance $I$, the goal is to find a feasible solution $s^* \in \mathcal{S}^I$ such that $c^I(s^*) = \min\{c^I(s)\ |\ s \in \mathcal{S}^I\}$ ($c^I(s^*) = \max\{c^I(s)\ |\ s \in \mathcal{S}^I\}$). Let $OPT(I)$ denote the optimum objective function value of problem instance $I$. A {\em factor $\rho$ approximation algorithm\/} for a minimization (maximization) problem $\Pi$ is a polynomial time algorithm $A$ such that the objective function value, denoted by $A(I)$, of the solution found by the algorithm $A$ on any problem instance $I \in \mathcal{I}$ satisfies $A(I) \leq \rho \cdot OPT(I)$ ($A(I) \geq \rho \cdot OPT(I)$). Naturally, $\rho \geq 1$ for minimization problems, and $0<\rho \leq 1$ for maximization problems.
Furthermore, a {\em polynomial time approximation scheme (PTAS)\/} for $\Pi$ is a family of algorithms $\{A_\varepsilon\}_{\varepsilon>0}$ such that $A_\varepsilon$ is a factor $1+\varepsilon$ approximation algorithm for $\Pi$ if it is a minimization problem, or a factor $1-\varepsilon$ approximation algorithm for $\Pi$ if it is a maximization problem.
In addition, a {\em fully polynomial time approximation scheme (FPTAS)\/} is like a PTAS, but the time complexity of each  $A_\varepsilon$ must be polynomial in $1/\varepsilon$ as well. 
\subsection{Previous work}
In this section first we overview existing complexity and approximability results for scheduling problems with the total late work minimization-, and the total early work maximization objective functions, but we abandon exact and heuristic methods as they are not directly related to our work. Then we briefly overview what is known about resource leveling in a parallel machine environment.

The total late work objective function ({\em late work\/} for short) is proposed by \citet{blazewicz1984scheduling}, where the complexity  of  minimizing the total late work in a parallel machine environment is investigated. For non-preemptive jobs it is mentioned that minimizing the late work is NP-hard, while for preemptive jobs, a polynomial-time algorithm, based on network flows, is described. This approach is extended to uniform machines as well.
Subsequently, several papers have appeared discussing the late work minimization problem in various processing environments. For the single machine environment, \citet{potts1992single} describe an $O(n \log n)$ time algorithm for the problem with preemptive jobs, where each job has its own due date. Furthermore, the non-preemptive variant  is shown to be NP-hard, and among other results, a pseudo-polynomial time algorithm is proposed for finding optimal solutions.
\citet{potts1992approximation} devise a fully polynomial time approximation scheme  for the single machine non-preemptive \lmp, which is extended to the total weighted late work problem by \citet{kovalyov1994fully}, where the late work of each job is weighted by a job-specific positive number.
For a two-machine flow shop, \citet{blazewicz2005two} prove that the late work minimization problem is NP-hard even if all the jobs have a common due date, and they also describe a dynamic programming based exact algorithm.
A more complicated dynamic program is proposed for the two-machine job shop problem with the late work criterion by \citet{blazewicz2007note}.
Late work minimization in an open shop environment, with preemptive or with non-preemptive jobs, is studied in \citet{blazewicz2004open}, where a number of complexity results are proved.
For the parallel machine environment, \citet{chen2016scheduling} prove that deciding whether a schedule with 0 late work exists is a strongly NP-hard decision problem, while if the number of machines is only 2, then it is binary NP-hard.
Furthermore, they describe an online algorithm for maximizing the early work of jobs that have to be scheduled in a given order. For several other complexity results not mentioned here, we refer to \cite{sterna2000problems, sterna2006late, sterna2011survey}.

As for the early work, besides the paper of \citet{chen2016scheduling}, we mention \citet{sterna2017polynomial}, where a PTAS is proposed for maximizing the early work in a parallel machine environment with 2 machines, where  all the jobs have a common due date.

Resource leveling is a well studied area of project scheduling, where a number of exact and heuristic methods are proposed for solving it for various objective functions and under various assumptions, see e.g., \citep{kis2005branch, neumann2000resourceleveling, verbeeck2017metaheuristic}. \citet{drotos2011resource}  consider a dedicated parallel machine environment, and propose and exact method for solving resource leveling problems optimally with hundreds of jobs.
In the same paper, some new complexity results are obtained.

\subsection{Results of the paper}
Before stating our first result, we formally define what we mean by the equivalence of two optimization problems in this paper.
Let $\Pi_1$ and $\Pi_2$ be two optimization problems, and we say that they are {\em equivalent\/} if there exist bijective functions $f$ and $g$, where $f$ establishes a one-to-one correspondence between the instances of $\Pi_1$ and that of $\Pi_2$, whereas $g$ establishes a one-to-one correspondence between the set of solutions of each instance $I$ of $\Pi_1$ and that of $f(I)$ of $\Pi_2$ such that for each $S \in {\cal S}^I$, $c^I(S) = c^{f(I)}(g(S))$\footnote{This is a rather strong concept of equivalence.}. After these preliminaries, we can state our first result.

\begin{theorem}\label{thm:reduction}
The \lmp~$P|d_j=d,n_i\leq N|Y$, and the \rlp~$P|p_j=1|\tilde{Y}$ are equivalent. 
\end{theorem}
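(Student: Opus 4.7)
My plan is to exhibit explicit bijections $f$ between the instance sets and $g$ between solution sets of corresponding instances. The key observation is that both problems have the same combinatorial skeleton: partition $n$ weighted jobs into a fixed number of ``bins'' subject to a per-bin cardinality bound, and minimize the total positive excess of the weights in a bin over a fixed threshold. In the late work problem the bins are the $m$ machines, the cardinality bound is $N$, the weights are $p_j$ and the threshold is $d$; in the resource leveling problem the bins are the $C$ time slots, the cardinality bound is the machine count (call it $m'$), the weights are $a_j$ and the threshold is $L$.

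On instances, I would simply rename parameters. Send a late work input with parameters $(m, N, d, (p_j))$ to the resource leveling input with $C := m$, $m' := N$, $L := d$ and $a_j := p_j$; the feasibility conditions $m \cdot N \ge n$ and $m' \cdot C \ge n$ coincide, so $f$ is a bijection on instance sets.

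The only part requiring care is the bijection $g$ on solutions. For a late work schedule $S$, let $\pi_S(j)$ denote the position of $j$ on machine $\mu_j(S)$ in order of increasing start time; because $S$ has no idle times, $S$ is uniquely encoded by the pairs $(\mu_j(S), \pi_S(j))_{j\in \J}$. I would set $t_j(g(S)) := \mu_j(S) - 1$ and $\mu_j(g(S)) := \pi_S(j)$; symmetrically, for an RL schedule $S'$ the ``first $k$ machines'' convention lets me recover the late work machine of $j$ as $t_j(S') + 1$ and its on-machine position as $\mu_j(S')$. The role swap is: ``time position on machine'' in the late work schedule becomes ``machine index within time slot'' in the RL schedule, and vice versa. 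The main bookkeeping is verifying that the two normalization conventions (no idle times; first $k$ machines used) are precisely what promote this partition-level correspondence to an honest schedule-level bijection, rather than an identification only up to reorderings within a bin.

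With the bijections in hand, the objective equality $Y(S) = \tilde Y(g(S))$ is immediate: since $J_i(S) = J_{i-1}(g(S))$ as sets, $p_j = a_j$ and $d = L$,
\begin{equation*}
\sum_{i=1}^{m} \max\Bigl\{0,\sum_{j \in J_i(S)} p_j - d\Bigr\} = \sum_{t=0}^{C-1} \max\Bigl\{0,\sum_{j \in J_t(g(S))} a_j - L\Bigr\}.
\end{equation*}
Thus $g$ is objective-preserving. No single step poses a deep technical obstacle; the work is entirely in pinning down the right encoding of a schedule so that one obtains a genuine bijection between the two problems.
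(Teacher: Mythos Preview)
Your proposal is correct and is essentially the same construction as the paper's: the same parameter identification $(C,m',L,a_j)=(m,N,d,p_j)$ on instances, and the same schedule map sending ``$\ell^{\text{th}}$ job on machine $i$'' to ``job on machine $\ell$ at time $i-1$'' in the RL instance, with the no-idle-time and first-$k$-machines conventions supplying exactly the normalization needed for bijectivity. The paper packages the verification into four short claims (instance bijection, feasibility of $g(S)$, bijectivity of $g$, objective preservation), which match your outline.
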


By (\ref{eq:lateXY}) and (\ref{eq:levelXY}), we have the following:
\begin{corollary}
The \emp~$P|d_j=d,n_i\leq N|X$, and the  \rlp~$P|p_j=1|\tilde{X}$ are equivalent. \label{cor:earlyequiv}
\end{corollary}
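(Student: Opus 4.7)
The plan is to derive Corollary \ref{cor:earlyequiv} as a one-line consequence of Theorem \ref{thm:reduction}, reusing the same pair of bijections $(f,g)$ and changing only the bookkeeping for the objective function.

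First, I would unpack what $f$ does at the level of instance data. The natural matching of parameters in Theorem \ref{thm:reduction} identifies each processing time $p_j$ with the corresponding resource requirement $a_j$ (together with identifying the common due date $d$ with the resource limit $L$, the number of machines $m$ with the deadline $C$, and the capacity $N$ with the number of parallel machines in the resource leveling instance). Hence the multisets $\{p_j\}$ and $\{a_j\}$ coincide under $f$, so $p_{sum} = a_{sum}$. This is the only extra fact needed beyond the content of Theorem \ref{thm:reduction}.

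Next, by Theorem \ref{thm:reduction}, $Y(S) = \tilde{Y}(g(S))$ for every feasible schedule $S \in \mathcal{S}^I$. Combining this with (\ref{eq:lateXY}) on the late work side and (\ref{eq:levelXY}) on the resource leveling side yields
$$X(S) = p_{sum} - Y(S) = a_{sum} - \tilde{Y}(g(S)) = \tilde{X}(g(S)),$$
which is precisely the condition that $(f,g)$ establishes the equivalence of $P|d_j=d,n_i\leq N|X$ and $P|p_j=1|\tilde{X}$ in the strict sense defined just before Theorem \ref{thm:reduction}.

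There is essentially no real obstacle here: the only substantive point is the invariance $p_{sum} = a_{sum}$ under $f$, which is a one-line check against the construction used in the proof of Theorem \ref{thm:reduction}. The remainder is elementary algebra using the two identities linking $X$ to $Y$ and $\tilde{X}$ to $\tilde{Y}$. The reason this has to be stated separately as a corollary rather than absorbed into Theorem \ref{thm:reduction} is merely that the paper's notion of equivalence demands that objective values match on the nose, so the relabeling from $Y,\tilde{Y}$ to $X,\tilde{X}$ must be argued, even if only formally.
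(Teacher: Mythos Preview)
Your proposal is correct and mirrors the paper's own reasoning: the paper simply states that the corollary follows from Theorem~\ref{thm:reduction} together with equations (\ref{eq:lateXY}) and (\ref{eq:levelXY}), and your write-up just makes explicit the one needed observation that $p_{sum}=a_{sum}$ under the bijection $f$ constructed in the proof of Theorem~\ref{thm:reduction}.
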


Now we turn to approximation algorithms. 
In Section \ref{sec:inapprox} we show that if we simply add a value $c'$ to $Y$ in the objective function, where $c'$ is an arbitrary positive  number, then it is impossible to get an approximation algorithm of factor smaller  than $\frac{c'+1}{c'}$ unless $P=NP$.

\begin{theorem}\label{thm:inapprox}
For any $\varepsilon>0$, there is no $\left(\frac{c'+1}{c'}-\varepsilon\right)$-approximation algorithm for $P2|d_j=d|c' +Y$ unless $P=NP$.
\end{theorem}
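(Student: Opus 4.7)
The plan is to establish the inapproximability via a gap-preserving reduction from the classical \textsc{Partition} problem, which is NP-complete. Recall that \textsc{Partition} asks, given positive integers $a_1,\ldots,a_n$ with $\sum_j a_j = 2S$, whether there exists $I \subseteq \{1,\ldots,n\}$ with $\sum_{j\in I} a_j = S$. Given any instance of \textsc{Partition}, I would construct an instance of $P2|d_j=d|c'+Y$ by taking one job per item, setting $p_j := a_j$ for all $j$, and the common due date $d := S$. Since we are on two machines, any feasible schedule partitions the jobs into two sets $J_1, J_2$ of cumulative loads $L_1, L_2$ with $L_1 + L_2 = 2S$, and the late work equals $\max\{0, L_1 - S\} + \max\{0, L_2 - S\}$.

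Next I would analyze the two cases of the reduction. If the \textsc{Partition} instance is a YES-instance, then there is an assignment with $L_1 = L_2 = S$, giving $Y=0$ and objective $c' + Y = c'$. If it is a NO-instance, then for every feasible schedule one of the two loads strictly exceeds $S$; since all $p_j$ are integers, the excess is at least $1$, hence $Y \geq 1$ and the objective is at least $c'+1$. Thus the optimum of the constructed instance is exactly $c'$ in the YES case and at least $c'+1$ in the NO case.

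Finally, I would derive the claimed inapproximability from this gap. Suppose, for contradiction, that for some $\varepsilon > 0$ there exists a polynomial time $\left(\frac{c'+1}{c'}-\varepsilon\right)$-approximation algorithm $A$ for $P2|d_j=d|c'+Y$. Applied to the constructed instance, on a YES-instance $A$ would return a value at most $\left(\frac{c'+1}{c'}-\varepsilon\right) \cdot c' = c' + 1 - \varepsilon c' < c' + 1$, whereas on a NO-instance every feasible solution, and in particular the one produced by $A$, has value at least $c'+1$. Comparing the output of $A$ with $c'+1$ therefore decides \textsc{Partition} in polynomial time, forcing $P=NP$.

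The only delicate point is ensuring that on a NO-instance the integrality argument really yields $Y \geq 1$ rather than just $Y > 0$; this is what produces a multiplicative gap matching exactly $\frac{c'+1}{c'}$, and it is the reason the reduction must be from an integer-valued problem such as \textsc{Partition}. Everything else is a routine gap-preserving argument, so I do not expect any serious obstacle beyond this observation.
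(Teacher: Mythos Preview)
Your proposal is correct and follows essentially the same approach as the paper: a gap-preserving reduction from \textsc{Partition} with $p_j:=a_j$ and $d:=S$, using integrality of the data to force $Y\geq 1$ in the NO case and hence a multiplicative gap of $\frac{c'+1}{c'}$. The paper's proof is organized slightly differently (it isolates the YES/NO dichotomy in a separate claim), but the reduction, the integrality argument, and the final contradiction are identical.
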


In Section~\ref{sec:earlyPTAS} we describe a PTAS for the  early work maximization problem extended with machine capacity constraints.
\begin{theorem}\label{thm:ptas_ew}
There is a PTAS for $P|d_j=d,n_i \leq N|X$.
\end{theorem}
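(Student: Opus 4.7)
The plan follows the classical big/small decomposition plus configuration enumeration, adapted to handle the machine capacity constraint.

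First I would normalize: we may assume $p_j \le d$ for every job (a job with $p_j > d$ alone saturates a machine and contributes exactly $d$ to the early work, and such jobs are handled separately), and that no machine in the schedule has load exceeding $2d$, since removing an over-$2d$ job cannot decrease $X$. Split $\J$ into \emph{big} jobs ($p_j > \varepsilon d$) and \emph{small} jobs ($p_j \le \varepsilon d$); under the load bound each machine holds at most $2/\varepsilon$ big jobs. Round each big $p_j$ down to the nearest multiple of $\varepsilon^2 d$, producing $O(1/\varepsilon^2)$ distinct big-job sizes. The total rounding error is at most $\varepsilon^2 d \cdot |\J_B| \le \varepsilon \sum_{j \in \J_B} p_j \le \varepsilon\, p_{sum}$, and a short case analysis using $mN \ge n$ shows $OPT = \Omega(\min(md, p_{sum}))$, so this rounding error is $O(\varepsilon \cdot OPT)$.

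Second I enumerate machine \emph{configurations} and their \emph{profile}. A configuration records (i) the multiset of at most $2/\varepsilon$ rounded big-job sizes it carries (taken from $O(1/\varepsilon^2)$ types, giving $(1/\varepsilon)^{O(1/\varepsilon)}$ possibilities), and (ii) its total small-job load rounded to the nearest multiple of $\varepsilon d$ ($O(1/\varepsilon)$ choices). The number of configurations $K=K(\varepsilon)$ depends only on $\varepsilon$, and since the machines are identical it suffices to record how many machines use each configuration; this gives $O(m^{K(\varepsilon)})$ profiles, polynomial for fixed $\varepsilon$. For each profile I check (a) that the supply of each rounded big-job size matches the total demand across configurations — a count comparison — and (b) that the small jobs can be distributed so that each machine's small-load target is met within $O(\varepsilon d)$ slack, while respecting the residual capacity $N$ minus the number of big jobs already assigned to that machine. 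Step (b) is a transportation-type feasibility problem that can be solved by LP and standard rounding, with per-machine load error bounded by a single small job ($\le \varepsilon d$), hence total loss $O(\varepsilon md) = O(\varepsilon \cdot OPT)$. Returning the early work of the best feasible profile yields a $(1-O(\varepsilon))$-approximation, which is a PTAS after rescaling $\varepsilon$.

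The main obstacle I anticipate is step (b): the capacity $N$ can make a load-feasible small-job packing combinatorially infeasible, and the LP rounding must be done so that $N$ is respected while the load deviation per machine stays $O(\varepsilon d)$. Standard transportation-LP rounding techniques (iterative rounding, or the classical ``round along the cycles and leaves of the support graph'' procedure) suffice in principle, but analysing the per-machine error jointly with the capacity constraint is the delicate point. A secondary issue is the lower bound $OPT = \Omega(\min(md,p_{sum}))$ used to convert additive rounding errors into relative ones; this should follow from a short load-balancing argument exploiting $mN \ge n$ together with the trivial observation that any non-full machine contributes its full load to $X$.
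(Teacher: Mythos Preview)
Your configuration-plus-LP plan is essentially the paper's Algorithm~A (together with its Algorithm~C for the small jobs under the capacity constraint): enumerate big-job layouts, solve an assignment LP for the small jobs, and round a basic solution by discarding the $O(m)$ fractional jobs. So the skeleton is right, and the LP-rounding issue you flag as the ``main obstacle'' is handled exactly as you suggest.

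The real gap is in how you convert additive errors into relative ones. You write ``total loss $O(\varepsilon md)=O(\varepsilon\cdot OPT)$'', which requires $OPT=\Omega(md)$; that is false already for an instance with many machines and a single short job. Your weaker claim $OPT=\Omega(\min(md,p_{sum}))$---even if proved---does not rescue the LP-rounding bound: when $p_{sum}\ll md$ the quantity $\varepsilon md$ is not $O(\varepsilon\min(md,p_{sum}))$. A related side issue: the normalisation ``no machine has load exceeding $2d$'' cannot be enforced under the capacity constraint (take $m=1$, $N=n$, and many jobs of size $d$), although for counting big jobs that \emph{start} before $d$ you do not actually need it.

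The paper closes this gap by a device missing from your plan: it runs \emph{two} algorithms and outputs the better schedule. Algorithm~A is analysed only under the additional hypothesis $X(S^*)\ge \varepsilon md$; with the small-job threshold set at $\varepsilon^2 d$ (not $\varepsilon d$), the LP-rounding loss becomes $O(\varepsilon^2 md)\le \varepsilon\, X(S^*)$ in that regime. For the complementary regime $X(S^*)<\varepsilon md$, the paper shows separately (Theorem~\ref{thm:earlyXsmall}) that LPT list scheduling \emph{respecting the machine capacities} is already a $(1-2\varepsilon)$-approximation; the argument uses that at most $m-1$ jobs can have length $\ge \varepsilon d$ and tracks where late work can arise under capacity-constrained LPT. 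It is this two-case split, not a universal lower bound on $OPT$, that turns the additive losses into multiplicative ones.
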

Since our result is valid even if $N \geq n$, we have the following corollary:
\begin{corollary}
There is a PTAS for $P|d_j=d|X$.
\end{corollary}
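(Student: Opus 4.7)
The plan is to follow the classical big-job / small-job PTAS framework, adapted to respect the machine capacity $N$: (i) restrict to near-optimum schedules in which each machine holds only $O_\varepsilon(1)$ big jobs; (ii) round the big jobs so that they take only $O_\varepsilon(1)$ distinct sizes; (iii) enumerate, by a compact configuration count, how many machines use each big-job pattern; and (iv) distribute the small jobs greedily within the remaining slot capacity of each machine.

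Fix $\varepsilon\in(0,1)$ and set $\delta:=\varepsilon/c$ for an absolute constant $c$ chosen at the end. Without loss of generality, $p_j\le d$ for every $j$: a job with $p_j>d$ saturates any machine it occupies, so replacing $p_j$ by $d$ does not change $X$ on any feasible schedule. Call a job \emph{big} if $p_j>\delta d$ and \emph{small} otherwise. The first step is a structural lemma: unless the trivial case $p_{sum}>2md$ applies (in which case every machine is saturable and $X=md$ is optimal, attained by any balanced greedy schedule), there is an optimum in which every machine carries at most $2/\delta$ big jobs. Starting from any optimum, as long as some machine $i$ has more than $2/\delta$ big jobs (whence $L_i>2d$, in particular $i$ is saturated) while another machine $i'$ has at most $2/\delta$ big jobs, move a big job from $i$ to $i'$ if $|J_{i'}|<N$, and otherwise swap a big job of $i$ with a small job of $i'$ (such a small job exists because $|J_{i'}|=N\ge|J_i|>2/\delta$ while $i'$ holds at most $2/\delta$ big jobs). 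A routine case analysis using $p_j\le d$ and big $>\delta d\ge$ small shows that $X$ does not decrease during the local search, and a convex potential such as $\sum_i|J_i\cap B|^2$ forces termination.

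Next, round every big-job processing time down to the nearest multiple of $\delta^2 d$. The alphabet $V$ of rounded sizes has $|V|\le 1/\delta^2$, so a \emph{configuration} (a multiset of elements of $V$ of cardinality at most $2/\delta$) is drawn from a family of size at most $(1/\delta^2)^{2/\delta}=2^{O(\log(1/\delta)/\delta)}$, a constant in the input size. The PTAS enumerates, in time $m^{O_\varepsilon(1)}$, all integer tuples $(q_C)_C$ with $\sum_C q_C\le m$ and $\sum_C q_C\cdot|C|_v=n_v$ for each $v\in V$, where $n_v$ is the number of big jobs rounded to $v$. For each feasible tuple, it places the big jobs accordingly and distributes the small jobs greedily: each still-unsaturated machine is filled with small jobs until it saturates or hits its slot limit $N-b_i$, after which leftover small jobs are placed on already-saturated machines. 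The algorithm returns the schedule of maximum $X$ across all enumerated tuples.

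The step I expect to be the main obstacle is the error analysis, where the total loss must be bounded by $\varepsilon\cdot OPT$ rather than $\varepsilon\cdot md$ (the latter is too weak whenever $OPT\ll md$). I would account for losses machine by machine. On an unsaturated machine, rounding costs at most $\delta^2 d$ per big job, but each big job itself contributes more than $\delta d$ to the load, so the relative loss is at most $\delta$ times the machine's own contribution to $X$. On a saturated machine the contribution is $d$, while the cumulative loss from rounding, from interchange of equi-rounded big jobs, and from the at-most-one small-job overshoot of size $\le\delta d$ is $O(\delta d)$, again $O(\delta)$ times the contribution. Summing the per-machine bounds and choosing $c$ large enough in $\delta=\varepsilon/c$ yields $X(\text{algorithm})\ge(1-\varepsilon)\cdot OPT$, completing the PTAS.
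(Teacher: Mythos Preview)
In the paper this corollary is a one-line specialisation of the preceding theorem: take $N\ge n$ so that the capacity constraints become vacuous. Your proposal instead re-derives (a version of) that theorem from scratch, and along a genuinely different line.

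The main structural difference is in the error analysis. The paper does \emph{not} attempt a per-machine charging argument. It runs two separate algorithms and returns the better schedule: the configuration-enumeration algorithm is shown to be a $(1-O(\varepsilon))$-approximation only under the side hypothesis $X(S^*)\ge\varepsilon\,m\,d$, while a capacity-respecting LPT rule handles the complementary regime $X(S^*)<\varepsilon\,m\,d$. Your idea of charging the loss on each machine against that machine's own contribution to $X$ is attractive precisely because it would eliminate this case split, and for the uncapacitated corollary it can indeed be pushed through. One caution: the comparison cannot be literally ``machine by machine'' as you write it, since your greedy step assigns small jobs to different machines than the optimum does; the $\le\delta d$ overshoot on each machine your greedy saturates has to be charged globally against the $d$ that machine contributes in \emph{your} schedule, not against whatever the optimum placed there.

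Where your outline has a genuine gap is in the capacity-constrained version you announce in your first sentence. The greedy small-job step (``fill each unsaturated machine until it saturates or hits its slot limit $N-b_i$'') can fail badly when small jobs have disparate sizes. Take $m=2$, $N=20$, $d=100$, $\delta=0.1$, no big jobs, twenty small jobs of size~$1$ and twenty of size~$10$: processing the size-$1$ jobs first exhausts one machine's slots at load~$20$ and forces all size-$10$ jobs onto the other, giving $X=120$ against $X^*=200$. This is exactly why the paper, for $N<n$, replaces greedy by an LP relaxation (Algorithm~C): a basic optimal LP solution has at most $2m$ fractional small jobs, each of size below $\varepsilon^2 d$, so dropping them costs at most $2\varepsilon^2 m d$. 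That additive $\varepsilon^2 m d$ term is precisely what forces the hypothesis $X(S^*)\ge\varepsilon\,m\,d$ in the paper's analysis and hence the two-algorithm split; your per-machine scheme does not obviously absorb it.
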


By Corollary \ref{cor:earlyequiv}, we immediately get an analogous result for the maximization variant of \rlp:
\begin{corollary}
There is a PTAS for the \rlp~$P|p_j=1|\tilde{X}$.
\end{corollary}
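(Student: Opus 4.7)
The plan is to invoke the equivalence established in Corollary \ref{cor:earlyequiv} between the resource leveling problem $P|p_j=1|\tilde X$ and the early work maximization problem $P|d_j=d,n_i\le N|X$, and then lift the PTAS given by Theorem \ref{thm:ptas_ew}.

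Concretely, let $\{A_\varepsilon\}_{\varepsilon>0}$ be the PTAS for $P|d_j=d,n_i\le N|X$ provided by Theorem \ref{thm:ptas_ew}, and let $f,g$ be the bijections of Corollary \ref{cor:earlyequiv}, so that $f$ maps an instance $I$ of $P|p_j=1|\tilde X$ to an equivalent instance $f^{-1}(I)$ of $P|d_j=d,n_i\le N|X$ (viewing the bijection in the opposite direction), and $g$ maps each feasible solution $S$ of $f^{-1}(I)$ to a feasible solution $g(S)$ of $I$ with $\tilde X(g(S))=X(S)$. Given $\varepsilon>0$ and an instance $I$ of $P|p_j=1|\tilde X$, the algorithm $B_\varepsilon$ I propose is: compute $f^{-1}(I)$, run $A_\varepsilon$ on $f^{-1}(I)$ to obtain a schedule $S$, and output $g(S)$.

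To verify that this is a PTAS, note that since $f,g$ are bijections preserving objective values, optimal solutions correspond to optimal solutions, so $OPT(I)=OPT(f^{-1}(I))$. Therefore
\[
\tilde X(g(S)) \;=\; X(S) \;\ge\; (1-\varepsilon)\cdot OPT(f^{-1}(I)) \;=\; (1-\varepsilon)\cdot OPT(I),
\]
which is the required guarantee for a maximization PTAS. Polynomiality of $B_\varepsilon$ follows from the polynomiality of $A_\varepsilon$ together with the fact that the bijections $f$ and $g$ from the proof of Theorem \ref{thm:reduction} are computable in polynomial time (the reduction is purely syntactic: time slots on a machine in the leveling instance correspond to ``slots'' on a single machine in the late/early work instance, and resource consumptions correspond to processing times).

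The main obstacle is essentially notational rather than mathematical: one has to be careful that the bijections furnished by Theorem \ref{thm:reduction} really are polynomially computable in both directions and that the equivalence is strong enough to preserve not only optimal values but the objective value of every feasible solution, since the approximation guarantee compares the algorithm's output to the optimum. Both properties are built into the strong notion of equivalence adopted in the paper (the footnote after the definition explicitly emphasizes this), and therefore no further work beyond the composition above is required.
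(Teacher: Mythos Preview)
Your proposal is correct and follows exactly the approach of the paper: the corollary is obtained immediately from Corollary~\ref{cor:earlyequiv} by transporting the PTAS of Theorem~\ref{thm:ptas_ew} through the objective-preserving bijection. The paper does not spell out the details you provide (polynomial computability of $f$ and $g$, preservation of values for all feasible solutions), but these are precisely the points needed and they are indeed guaranteed by the strong notion of equivalence used in Theorem~\ref{thm:reduction}.
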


Let  $c > 0$ be any constant, independent of the problem instances of $P|d_j=d,n_i \leq N|Y$.
In Section~\ref{sec:ptas} we adapt the results of Section~\ref{sec:earlyPTAS} to the problem $P|d_j=d,n_i \leq N|c\cdot p_{sum} + Y$. The choice of shifting $Y$ by $c\cdot p_{sum}$ is justified to some extent by Theorem~\ref{thm:inapprox}.

\begin{theorem}\label{thm:ptas}
There is a PTAS for $P|d_j=d,n_i \leq N|c \cdot p_{sum}+Y$.
\end{theorem}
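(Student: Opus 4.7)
The plan is to reduce the shifted late work problem directly to the early work PTAS of Theorem~\ref{thm:ptas_ew}, exploiting the identity $X + Y = p_{sum}$ from~(\ref{eq:lateXY}). In particular, the feasible solutions of the two problems are identical, the optimal schedules coincide, and for any schedule $S$ we have $Y(S) = p_{sum} - X(S)$. So if $X^*$ and $Y^*$ denote the optimal early- and late work values respectively, then $X^* + Y^* = p_{sum}$, and minimizing $c \cdot p_{sum} + Y$ over $\mathcal{S}$ is the same as maximizing $X$ over $\mathcal{S}$.

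Given a target ratio $\varepsilon > 0$, I would set $\varepsilon' := c \cdot \varepsilon$ (restricted to $(0,1)$ without loss of generality, as larger $\varepsilon$ is trivial) and run the PTAS $A_{\varepsilon'}$ from Theorem~\ref{thm:ptas_ew}. This returns a schedule $S$ with $X(S) \geq (1 - \varepsilon') X^*$, and its late work satisfies
\begin{equation*}
Y(S) \;=\; p_{sum} - X(S) \;\leq\; p_{sum} - (1-\varepsilon')X^* \;=\; Y^* + \varepsilon' X^*.
\end{equation*}
Hence
\begin{equation*}
c \cdot p_{sum} + Y(S) \;\leq\; (c \cdot p_{sum} + Y^*) + \varepsilon' X^*.
\end{equation*}

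The key algebraic step is to turn the additive slack $\varepsilon' X^*$ into a multiplicative one against the shifted optimum. Since $X^* \leq p_{sum}$ and $Y^* \geq 0$, we have
\begin{equation*}
\varepsilon' X^* \;\leq\; \varepsilon' \cdot p_{sum} \;=\; \frac{\varepsilon'}{c} \cdot c \cdot p_{sum} \;\leq\; \frac{\varepsilon'}{c}\bigl(c \cdot p_{sum} + Y^*\bigr) \;=\; \varepsilon \bigl(c \cdot p_{sum} + Y^*\bigr),
\end{equation*}
so $c \cdot p_{sum} + Y(S) \leq (1+\varepsilon)(c \cdot p_{sum} + Y^*)$, giving the desired $(1+\varepsilon)$-approximation. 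Since $A_{\varepsilon'}$ runs in polynomial time for fixed $\varepsilon'$ and $c$ is a fixed constant, the overall algorithm is polynomial in the input size, yielding a PTAS.

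There is no real obstacle beyond checking the algebra: the work is entirely encapsulated in Theorem~\ref{thm:ptas_ew}. The only subtlety worth emphasising is why the shift $c \cdot p_{sum}$ (proportional to the instance data) suffices here, while the constant shift $c'$ fails by Theorem~\ref{thm:inapprox}: the inequality $X^* \leq p_{sum}$ ensures that the additive slack from the early work approximation can always be charged to the proportional part $c \cdot p_{sum}$ of the shifted objective, whereas an input-independent shift admits instances where $X^*$ dwarfs the shift and no such charging is possible.
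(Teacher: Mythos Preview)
Your argument is correct and, in fact, cleaner than the route the paper takes. The paper does \emph{not} simply invoke Theorem~\ref{thm:ptas_ew} as a black box; instead, in Section~\ref{sec:ptas} it re-opens the machinery of Algorithm~$A$, this time rounding the big jobs \emph{up} rather than down, and re-proves analogues of the key lemmas (Lemmas~\ref{lem:S_int-S*}--\ref{lem:S-S_int}) directly for the shifted late-work objective, before again combining with the LPT bound of Theorem~\ref{thm:earlyXsmall}. Your reduction sidesteps all of this: once the early-work PTAS is in hand, the identity $X+Y=p_{sum}$ together with the trivial bounds $X^*\le p_{sum}$ and $Y^*\ge 0$ converts a multiplicative $(1-\varepsilon')$ guarantee on $X$ into a multiplicative $(1+\varepsilon'/c)$ guarantee on $c\cdot p_{sum}+Y$ in two lines. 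The paper's approach yields essentially the same constant ($1+4\varepsilon/c$ after unpacking), so nothing is lost; what your version buys is modularity and the avoidance of a second pass through the rounding/layout analysis. The one point worth tightening is the parenthetical about $\varepsilon'\in(0,1)$: you should state explicitly that when $c\varepsilon\ge 1$ any feasible schedule already satisfies $c\cdot p_{sum}+Y(S)\le (c+1)p_{sum}\le (1+1/c)(c\cdot p_{sum}+Y^*)\le (1+\varepsilon)(c\cdot p_{sum}+Y^*)$, so the interesting regime is indeed $\varepsilon'<1$.
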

Since our result is valid even if $N \geq n$, we have the following corollary:
\begin{corollary}
There is a PTAS for $P|d_j=d|c \cdot p_{sum}+Y$.
\end{corollary}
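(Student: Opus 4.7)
The plan is to reduce the shifted late-work problem directly to the early-work problem via the identity $X+Y = p_{sum}$ established in (\ref{eq:lateXY}), and then invoke the PTAS from Theorem \ref{thm:ptas_ew}. Observe that the two problems share the same feasible region (the set of schedules assigning each job to a machine subject to the capacity constraints), and $X(S) + Y(S) = p_{sum}$ for every schedule $S$. Consequently $OPT_X + OPT_Y = p_{sum}$, and the optimum of the shifted problem is
\[
c \cdot p_{sum} + OPT_Y = (c+1)\,p_{sum} - OPT_X.
\]

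Given $\varepsilon>0$, I would run the PTAS of Theorem \ref{thm:ptas_ew} with precision parameter $\varepsilon' := c\varepsilon$ and return its schedule $S$. Since $c$ is a fixed constant independent of the input, $\varepsilon'$ is a constant, and the overall running time remains polynomial in the input size. The output satisfies $X(S) \geq (1-\varepsilon')\,OPT_X$, hence
\[
c \cdot p_{sum} + Y(S) = (c+1)\,p_{sum} - X(S) \leq (c+1)\,p_{sum} - OPT_X + \varepsilon'\,OPT_X = c \cdot p_{sum} + OPT_Y + \varepsilon'\,OPT_X.
\]

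The remaining calculation is to show that the slack $\varepsilon'\,OPT_X$ is at most an $\varepsilon$-fraction of the shifted optimum. Using the bounds $OPT_X \leq p_{sum}$ and $c\cdot p_{sum} + OPT_Y \geq c\cdot p_{sum}$, we get
\[
\frac{\varepsilon'\,OPT_X}{c\cdot p_{sum} + OPT_Y} \;\leq\; \frac{c\varepsilon \cdot p_{sum}}{c\cdot p_{sum}} \;=\; \varepsilon,
\]
so $c\cdot p_{sum} + Y(S) \leq (1+\varepsilon)(c\cdot p_{sum} + OPT_Y)$, as required.

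There is essentially no obstacle here: the role of the shift $c\cdot p_{sum}$ is precisely to make the denominator of the approximation ratio at least $\Omega(p_{sum})$, which absorbs the additive error $\varepsilon' OPT_X \leq \varepsilon' p_{sum}$ coming from the early-work PTAS. The only point worth double-checking is that the capacity-respecting feasible regions of the early-work and shifted late-work instances coincide, so that $X + Y = p_{sum}$ really does hold schedule-by-schedule; this is immediate from the formulation. The corollary for $P|d_j=d|c\cdot p_{sum}+Y$ then follows by setting $N=n$, which renders the capacity constraint void.
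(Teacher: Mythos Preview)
Your argument is correct, and it takes a genuinely different route from the paper. The paper does not derive the corollary directly from the early-work PTAS (Theorem~\ref{thm:ptas_ew}); instead it first proves Theorem~\ref{thm:ptas} for the capacity-constrained shifted late-work problem by re-running essentially the same algorithmic framework as in Section~\ref{sec:earlyPTAS}, but with the big-job processing times rounded \emph{up} rather than down, and with a fresh sequence of lemmas (Lemmas~\ref{lem:S_int-S*}--\ref{lem:S-S_int}) bounding $c\cdot p_{sum}+Y$ directly. The corollary is then obtained from Theorem~\ref{thm:ptas} by setting $N\ge n$.

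Your reduction via the identity $X+Y=p_{sum}$ and the inequalities $OPT_X\le p_{sum}$, $c\cdot p_{sum}+OPT_Y\ge c\cdot p_{sum}$ is cleaner: it treats the early-work PTAS as a black box and converts its multiplicative guarantee on $X$ into a multiplicative guarantee on $c\cdot p_{sum}+Y$ in two lines. In fact, as you note, the same computation already yields Theorem~\ref{thm:ptas} in full (with machine capacities), so the separate analysis in Section~\ref{sec:ptas} is, from your point of view, redundant. The paper's approach has the minor advantage that it makes the dependence of the approximation factor on $c$ explicit at the algorithmic level and keeps the two PTAS derivations parallel, but your argument is shorter and entirely sufficient. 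One small caveat: you should restrict to $\varepsilon'<1$ (equivalently $\varepsilon<1/c$) so that invoking the early-work PTAS with parameter $\varepsilon'=c\varepsilon$ is meaningful; this is harmless since a PTAS only needs to handle arbitrarily small $\varepsilon$.
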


Notice that Theorem~\ref{thm:reduction} remains valid if we replace $Y$ by $c\cdot p_{sum}+Y$ in the \lmp\ and $\tilde{Y}$ by $c\cdot p_{sum} + \tilde{Y}$ in the minimization variant of the \rlp, thus we get the following:
\begin{corollary}
There is a PTAS for the \rlp~$P|p_j=1|c\cdot p_{sum} + \tilde{Y}$.
\end{corollary}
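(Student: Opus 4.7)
The plan is to derive this corollary as an immediate consequence of Theorem~\ref{thm:ptas} combined with the strong equivalence of Theorem~\ref{thm:reduction}, invoked in the shifted form stated in the paragraph preceding the corollary. Essentially no new scheduling argument is needed; the job is to verify that approximation factors transfer cleanly through the equivalence.

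First, I would unpack what the strong equivalence buys us. By the definition given earlier, the bijections $f$ on instances and $g$ on solutions satisfy $c^I(S) = c^{f(I)}(g(S))$ for every instance $I$ of the \lmp\ and every feasible schedule $S$. In particular, $OPT(I) = OPT(f(I))$, and any feasible schedule whose objective value is within a factor $1+\varepsilon$ of the optimum on one side is mapped to a schedule with exactly the same objective value, hence also within $1+\varepsilon$ of the optimum, on the other side. Both $f^{-1}$ and $g$ are polynomial-time computable (this is implicit in the proof of Theorem~\ref{thm:reduction}, where $f$ and $g$ are defined by an explicit correspondence between processing times and resource requirements), so no complexity is lost in the transfer.

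Second, given any instance $I'$ of the \rlp\ $P|p_j=1|c \cdot p_{sum} + \tilde{Y}$ and any fixed $\varepsilon > 0$, I would (i) compute $I := f^{-1}(I')$, which is an instance of $P|d_j=d, n_i \leq N | c \cdot p_{sum} + Y$; (ii) apply the PTAS from Theorem~\ref{thm:ptas} to $I$ with tolerance $\varepsilon$, obtaining a feasible schedule $S$ with $c^I(S) \leq (1+\varepsilon)\,OPT(I)$; and (iii) return $g(S)$ as a feasible schedule for $I'$. The chain $c^{I'}(g(S)) = c^I(S) \leq (1+\varepsilon)\,OPT(I) = (1+\varepsilon)\,OPT(I')$ then gives the required $(1+\varepsilon)$-approximation, and the total running time is polynomial in the size of $I'$ for every fixed $\varepsilon$, as required for a PTAS.

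The main -- and in fact only -- subtlety, which I would address explicitly, is the identification of the shifting term $c \cdot p_{sum}$ across the two problems. Strictly speaking, $p_{sum}$ on the late work side becomes (under $f$) the sum of resource requirements on the resource leveling side, so one has to check that the remark stated before the corollary -- namely that Theorem~\ref{thm:reduction} remains valid after replacing $Y$ by $c \cdot p_{sum} + Y$ and $\tilde{Y}$ by the corresponding shift -- indeed holds for the explicit correspondence $f$. This amounts to noting that $f$ preserves the relevant aggregate $p_{sum}$ as an instance-dependent constant, so the shift contributes the same additive value on both sides and the objective-preservation identity $c^I(S) = c^{f(I)}(g(S))$ continues to hold verbatim. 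Once this book-keeping is checked, the corollary follows with no further work.
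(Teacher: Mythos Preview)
Your proposal is correct and follows essentially the same approach as the paper, which derives the corollary directly from Theorem~\ref{thm:ptas} via the (shifted) equivalence of Theorem~\ref{thm:reduction} without giving any further argument. Your write-up is in fact more careful than the paper's, since you explicitly spell out the transfer of approximation factors through the bijections $f,g$ and flag the bookkeeping issue that the shift $c\cdot p_{sum}$ on the late-work side corresponds under $f$ to the sum of resource requirements on the resource-leveling side.
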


The approximation schemes for the 4 distinct problems all rely on the PTAS for the early work maximization problem, which is extended to the other 3 problems by appropriate transformations. In the design of the PTAS for the early work maximization problem, we had some difficulties in showing the approximation guarantee. The technique we found may be used for designing (fully) polynomial time approximation schemes for completely different combinatorial optimization problems as well.
We illustrate the main ideas for a maximization problem $\Pi$.
Suppose we have devised a family of algorithms $\{A_\varepsilon\}_{\varepsilon>0}$ for $\Pi$, but we are able to prove that it is a factor $(1-\varepsilon)$  approximation algorithm only under the hypothesis that $OPT(I) \geq \varepsilon f(I)$ for a problem instance $I$, where $f$ is a function assigning some rational number to $I$. Then we have to devise another algorithm, which is also a factor $(1-\varepsilon)$ approximation algorithm on those instances such that $OPT(I) < \varepsilon f(I)$.
Now, if we run both methods on an arbitrary instance $I$, then at least one of them will return a solution of value at least $(1-\varepsilon)$ times the optimum. Clearly, the combined method is an (F)PTAS for the problem $\Pi$.

\section{Equivalence of the \lmp~and the \rlp}\label{sec:res_lev}

\begin{proof}[Proof of Theorem \ref{thm:reduction}]
The proof consists of two parts.
First, we define a bijective function between the set of instances of the \lmp~and the set of instances of the \rlp~with unit-time jobs.
Then, we consider an arbitrary pair of instances of the two problems (the pair is determined by the previous function) and we define another bijective function between the schedules  of the two instances.

Consider an arbitrary instance $I$ of the \lmp~($m$ machines, $n$ jobs with processing times $p_j$ ($j\in \{1,\ldots,n\}$) and common due date $d$, and upper bound $N$ on the number of jobs on each machine).
The corresponding instance of the \rlp~has $N$ machines, $n$ jobs with processing times 1, resource requirements $a_j := p_j$ ($j=1,\ldots,n$), common deadline $C := m$,  and resource limit $L := d$. The proof of the theorem is divided into a series of claims, and the proofs of Claims~\ref{claim1}-\ref{claim3} can be found in the appendix.
\begin{claim}
The defined function is a bijection between the  sets of instances of the two problems.
\label{claim1}
\end{claim}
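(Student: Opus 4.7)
The plan is to exhibit an explicit inverse to the stated mapping and verify that the two are mutual inverses. Given any instance $I'$ of the \rlp~$P|p_j=1|\tilde{Y}$ specified by $m'$ machines, $n$ unit-time jobs with resource requirements $a_j$, common deadline $C$, and resource limit $L$, I define the candidate inverse $f^{-1}$ by sending $I'$ to the instance of the \lmp~with $m:=C$ machines, $n$ jobs with processing times $p_j:=a_j$, common due date $d:=L$, and machine capacity $N:=m'$.

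First, I would check that both directions preserve instance validity. The LMP feasibility condition $m\cdot N\geq n$ translates under the forward map to the RLP feasibility condition for its $N$ machines and horizon $C=m$, namely $N\cdot C\geq n$, which is the same inequality; nonnegativity and integrality of the parameters are preserved by construction. A symmetric check handles $f^{-1}$. Thus both maps are well defined between the sets of valid instances.

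Next, I would verify the two composition identities. Starting from an LMP instance with parameters $(m,n,\{p_j\},d,N)$, the forward map produces the RLP instance with parameters $(N,n,\{p_j\},m,d)$, and applying $f^{-1}$ to the latter gives back an LMP instance with $C=m$ machines, processing times $a_j=p_j$, due date $L=d$, and capacity $m'=N$, recovering the original. The reverse composition starting from an RLP instance is analogous, because each of the five parameters is swapped with its counterpart exactly once in each round trip. Hence $f$ is injective and surjective, i.e.\ a bijection.

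The argument is entirely routine; the only conceptual content is that the roles of ``number of parallel machines'' and ``per-machine job capacity'' are swapped between the two formulations, while the $n$ LMP processing times correspond one-to-one with the $n$ RLP resource requirements. I do not anticipate any hard step in this claim; the work lies in the later claims, where the schedules (rather than the instances) must be matched and the objective values shown to coincide.
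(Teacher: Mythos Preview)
Your proposal is correct and essentially the same as the paper's own proof: the paper simply asserts that the map is injective and surjective, which is exactly what you verify by writing down the explicit inverse (swap machines with capacity, due date with resource limit, processing times with resource requirements) and checking the two compositions. Your version is more detailed, but the content is identical.
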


Now, we describe a mapping  from the set of feasible schedules of any instance of the \lmp\ to that of the corresponding instance of the \rlp.
Let instance $I$ of the \lmp~be fixed and let $I'$ be the corresponding instance of \rlp.
Let $S$ be any feasible schedule for the instance $I$, our function defines a schedule $S'$ for $I'$ based on $S$ as follows.
If a job $j$ is the $\ell^{th}$ job scheduled on machine $i$ in $S$ then schedule the corresponding job of $I'$  on machine $\ell$ at time $t_j(S'):=i-1$, for an illustration, see Fig.~\ref{fig:schmap}.

\begin{figure}
\begin{tikzpicture}[font=\small]

\def\ox{0} 
\def\oy{0} 
\coordinate(o) at (\ox,\oy); 

\def\ui{0}
\def\uii{4.2}
\coordinate(u1) at (\ui,\oy);
\coordinate(u2) at (\uii,\oy);

\tikzstyle{mystyle}=[draw, fill opacity=1, minimum height=0.5cm,rectangle, inner sep=0pt, font=\small]

\def\tl{6} 
\def\oyi{0}
\def\oyii{0.7}
\def\oyiii{1.4}
\draw [-latex](\ox,\oyi) node[above left]{$M_3$} -- (\ox+\tl,\oyi) node[above,font=\small]{$t$};
\draw [-latex](\ox,\oyii) node[above left]{$M_2$} -- (\ox+\tl,\oyii) node[above,font=\small]{};
\draw [-latex](\ox,\oyiii) node[above left]{$M_1$} -- (\ox+\tl,\oyiii) node[above,font=\small]{};

\coordinate (uq2) at (\uii,\oyi);
\draw[] (uq2) -- ($(uq2)-(0,0.2)$) node[below right=0cm and -0.3cm] {$d$}; 

\def\b{0.9}
\def\bb{0.7}
\def\bbb{1.5}
\def\bbbb{0.9}
\node(b1) [above right=(-0.01cm+\oyiii cm) and -0.01cm of o,mystyle, minimum 
width=0.7 cm]{1};
\node(b1)[right=0cm of b1,mystyle, minimum width=0.7 cm]{2};
\node(b1)[right=0cm of b1,mystyle, minimum width=1.4 cm]{3};
\node(b1)[right=0cm of b1,mystyle, minimum width=0.7 cm]{4};

\def\bi{0.65}
\def\bii{0.5}
\def\biii{0.7}
\def\biv{0.5}

\node(b1) [above right=(-0.01cm+\oyii cm) and -0.01cm of o,mystyle, minimum width=1.4 cm]{5};
\node(b1)[right=0cm of b1,mystyle, minimum width=0.7 cm]{6};
\node(b1)[right=0cm of b1,mystyle, minimum width=2.8 cm]{7};

\node(b1) [above right=-0.01cm and -0.01cm of o,mystyle, minimum width=0.7 cm]{8};
\node(b1)[right=0cm of b1,mystyle, minimum width=2.1 cm]{9};
\node(b1)[right=0cm of b1,mystyle, minimum width=0.7 cm]{10};
\node(b1)[right=0cm of b1,mystyle, minimum width=1.4 cm]{11};

\draw[dashed] (uq2) -- ($(uq2)+(0,1.9)$); 

\draw  (2,2.1) node[above]{$N=4$};

\def\ox{7.5} 
\def\oy{0} 
\def\pi{0.5}
\coordinate(o) at (\ox,\oy); 
\def\uiv{\ox+\pi+\pi+\pi}
\coordinate(u4) at (\uiv,\oy);

\def\tl{2.5} 
\def\oyi{0}
\def\oyii{\pi+0.2}
\def\oyiii{\pi+0.2+\pi+0.2}
\def\oyiv{\pi+0.2+\pi+0.2+\pi+0.2}
\draw [-latex](\ox,\oyi) node[above left]{$M'_4$} -- (\ox+\tl,\oyi) node[above,font=\small]{$t$};
\draw [-latex](\ox,\oyii) node[above left]{$M'_3$} -- (\ox+\tl,\oyii) node[above,font=\small]{};
\draw [-latex](\ox,\oyiii) node[above left]{$M'_2$} -- (\ox+\tl,\oyiii) node[above,font=\small]{};
\draw [-latex](\ox,\oyiv) node[above left]{$M'_1$} -- (\ox+\tl,\oyiv) node[above,font=\small]{};


\coordinate (C) at (\uiv,\oyi);
\draw[] (C) -- ($(C)-(0,0.2)$) node[below right=0cm and -0.3cm] {$C=3$}; 

\node(b1) [above right=(2.09 cm) and -0.01cm of o,mystyle, minimum width=\pi cm]{1};
\node(b1)[right=0cm of b1,mystyle, minimum width=\pi cm]{5};
\node(b1)[right=0cm of b1,mystyle, minimum width=\pi cm]{8};

\node(b1) [above right=(1.39 cm) and -0.01cm of o,mystyle, minimum width=\pi cm]{2};
\node(b1)[right=0cm of b1,mystyle, minimum width=\pi cm]{6};
\node(b1)[right=0cm of b1,mystyle, minimum width=\pi cm]{9};

\node(b1) [above right=(0.69 cm) and -0.01cm of o,mystyle, minimum width=\pi cm]{3};
\node(b1)[right=0cm of b1,mystyle, minimum width=\pi cm]{7};
\node(b1)[right=0cm of b1,mystyle, minimum width=\pi cm]{10};

\node(b1) [above right=-0.01cm and -0.01cm of o,mystyle, minimum width=\pi cm]{4};
\node(b1)[right=\pi cm of b1,mystyle, minimum width=\pi cm]{11};
\end{tikzpicture}
\caption{Corresponding schedules for \lmp~and \rlp.}\label{fig:schmap}
\end{figure}

\begin{claim}
$S'$ is feasible for $I'$.
\label{claim2}
\end{claim}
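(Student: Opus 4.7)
The plan is to verify the feasibility conditions of RLP schedules on the instance $I'$: each $\mu_j(S')$ lies in $\{1,\ldots,N\}$, each $t_j(S')$ lies in $\{0,\ldots,C-1\}$, no two jobs share the same (machine, time) pair, and the ``first $k$ machines'' convention from the RLP problem definition is respected.

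First I would fix notation: for each job $j$, let $i_j := \mu_j(S)$ and let $\ell_j$ denote the position of $j$ among the jobs processed on machine $i_j$ in $S$, counted in time order starting from~1. The no-idle-time assumption used throughout the paper guarantees that the positions actually used on each machine $i$ form the contiguous initial segment $\{1,\ldots,|J_i(S)|\}$, so each $\ell_j$ is well defined. By construction of the mapping, $\mu_j(S') = \ell_j$ and $t_j(S') = i_j - 1$.

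The range checks are then immediate: the capacity constraint $|J_i(S)| \leq N$ of $I$ forces $\ell_j \in \{1,\ldots,N\}$, which matches the machine set of $I'$, while $i_j \in \{1,\ldots,m\}$ gives $t_j(S') \in \{0,\ldots,m-1\} = \{0,\ldots,C-1\}$ since $C = m$. For non-collision, two distinct jobs $j,k$ with $\mu_j(S') = \mu_k(S')$ and $t_j(S') = t_k(S')$ would share the same machine \emph{and} the same position index in $S$, which is impossible. The ``first $k$ machines'' convention follows because the set of jobs starting at time $t$ in $S'$ is precisely $J_{t+1}(S)$, whose position indices in $S$ fill $\{1,\ldots,|J_{t+1}(S)|\}$ from below, again by the no-idle-time assumption.

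I expect no real obstacle here; the claim is essentially true by construction. The only item worth explicitly flagging is the reliance on the no-idle-time convention, which is what makes the position indices $\ell_j$ well defined and contiguous, and thereby makes the ``first $k$ machines'' clause of RLP feasibility automatic from the mapping.
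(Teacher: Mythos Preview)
Your proposal is correct and follows the same direct-verification approach as the paper. The paper's own proof is much terser---it only explicitly notes that at most $N$ jobs on each machine of $S$ ensures a valid machine index in $I'$, and that unit processing times prevent overlap---whereas you additionally spell out the starting-time range check and the ``first $k$ machines'' convention; these extra checks are sound and welcome, but the underlying argument is the same.
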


\begin{claim}
The mapping between the schedules for $I$ and that for $I'$ is a bijection.
\label{claim3}
\end{claim}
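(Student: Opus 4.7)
The plan is to exhibit an explicit inverse of the schedule mapping $\Phi: S \mapsto S'$ defined just before Claim~\ref{claim2}, and then verify that both compositions are the identity. This is the most direct route since the mapping is defined constructively and the conventions on the two sides (no idle times on any machine in $S$, and the first $k$ machines are used at each time point in $S'$) are designed to be compatible.

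First I would construct the candidate inverse $\Phi^{-1}:{\cal S}^{I'}\to {\cal S}^I$ as follows. Let $S'$ be a feasible schedule for $I'$, and for each $i\in\{1,\dots,m\}$ let $k_i$ denote the number of jobs of $S'$ starting at time $i-1$; by the convention in force for the \rlp, these jobs occupy machines $1,2,\dots,k_i$ in $S'$. Define $S=\Phi^{-1}(S')$ by placing the job occupying machine $\ell$ at time $i-1$ in $S'$ as the $\ell$-th job (in order $\ell=1,2,\dots,k_i$) of machine $i$ in $S$, with no idle times, so that starting times are forced. I would then check that this $S$ is feasible for $I$: each machine $i$ receives $k_i\le N$ jobs (machine capacity), all $n$ jobs are scheduled (since $S'$ schedules all $n$ jobs), the scheduled set on each machine is an interval without idle times by construction, and machine indices lie in $\{1,\dots,m\}$ because $C=m$.

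Next I would verify that $\Phi\circ\Phi^{-1}$ is the identity on ${\cal S}^{I'}$ and $\Phi^{-1}\circ\Phi$ is the identity on ${\cal S}^{I}$. Starting from $S'$, apply $\Phi^{-1}$: the $\ell$-th job on machine $i$ in $\Phi^{-1}(S')$ is, by definition, the job that occupied machine $\ell$ at time $i-1$ in $S'$. Applying $\Phi$ then places it back on machine $\ell$ at time $i-1$, reproducing $S'$ exactly; the first-$k$-machines convention of $S'$ is recovered because on machine $i$ of $\Phi^{-1}(S')$ jobs are numbered $1,\dots,k_i$ and $\Phi$ sends the $\ell$-th one to machine $\ell$. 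The reverse composition is symmetric: for a feasible $S$ with $k_i$ jobs on machine $i$, the image $\Phi(S)$ has exactly the first $k_i$ machines busy at time $i-1$, and $\Phi^{-1}$ rebuilds machine $i$ of $S$ in the original order.

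The main obstacle, and the step I would take most care with, is aligning the two conventions so that the inverse is actually a well-defined function (not depending on arbitrary tie-breaking). Concretely, I must use the ``first $k$ machines busy'' assumption on $S'$ to recover the ordering of jobs on each machine of $S$ unambiguously, and conversely use the ``no idle times'' assumption on $S$ to guarantee that $\Phi(S)$ satisfies the resource-leveling convention. Once these two facts are spelled out, injectivity and surjectivity of $\Phi$ follow immediately from the identities in the previous paragraph, and the bijection claim is established.
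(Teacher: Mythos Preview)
Your proposal is correct and follows essentially the same approach as the paper: both construct the inverse map by sending the job on machine $\ell$ at time $i-1$ in $S'$ to the $\ell$-th job on machine $i$ in $S$, and both rely on the ``first $k$ machines busy'' convention to ensure this preimage is feasible and well defined. The paper is more terse (asserting injectivity is ``easy to see'' and then exhibiting the preimage for surjectivity), whereas you spell out both compositions explicitly, but the substance is identical.
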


\begin{claim}
If the late work of some schedule $S$ for instance $I$ is $Y$, then the objective function value of the corresponding schedule $S'$ for $I'$ is also $Y$.
\end{claim}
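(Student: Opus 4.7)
The plan is to verify the equality by unwinding the two objective functions and invoking the explicit correspondence between the machines of $I$ and the time points of $I'$. Under the instance mapping established in Claim~\ref{claim1} we already have $a_j=p_j$ for every $j\in\J$, $L=d$, and $C=m$, so the quantities summed on the two sides are built from the same numerical data; the only thing to check is that the grouping of jobs in the two sums corresponds correctly.

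First I would observe that by the very construction of $S'$, a job $j$ satisfies $t_j(S')=t$ if and only if $j$ is scheduled on machine $i=t+1$ in $S$ (the particular machine $\ell$ chosen in $S'$ is irrelevant for the objective, since only the multiset of jobs starting at a given time matters). Hence
\[
J_t(S')=J_{t+1}(S) \qquad \text{for every } t\in\{0,\dots,m-1\}.
\]
Combining this with $a_j=p_j$ and $L=d$ gives, for every such $t$,
\[
\max\Bigl\{0,\sum_{j\in J_t(S')}a_j-L\Bigr\}=\max\Bigl\{0,\sum_{j\in J_{t+1}(S)}p_j-d\Bigr\}.
\]

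Summing over $t=0,\dots,C-1=m-1$ and re-indexing by $i=t+1$, I obtain
\[
\tilde{Y}(S')=\sum_{t=0}^{m-1}\max\Bigl\{0,\sum_{j\in J_t(S')}a_j-L\Bigr\}=\sum_{i=1}^{m}\max\Bigl\{0,\sum_{j\in J_i(S)}p_j-d\Bigr\}=Y,
\]
which is precisely the claimed identity.

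There is no substantive obstacle in this step: all the non-trivial work was carried out in Claims~\ref{claim1}--\ref{claim3}, and here the verification reduces to bookkeeping. The only potential pitfall is the off-by-one in the time index (time $t$ in $I'$ corresponds to machine $t+1$ in $I$) and the need to note explicitly that the machine index $\ell$ on which a job is placed in $S'$ does not enter $\tilde{Y}(S')$, so permuting jobs across machines in $S'$ at a fixed start time does not change the value of the objective.
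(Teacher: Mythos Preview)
Your proof is correct and follows essentially the same approach as the paper: both arguments identify the set of jobs starting at time $i-1$ in $S'$ with the set of jobs on machine $M_i$ in $S$, invoke the equalities $a_j=p_j$, $L=d$, $C=m$, and match the two sums term by term. Your version is slightly more explicit about the off-by-one indexing and the irrelevance of the machine index $\ell$ in $S'$, but the substance is identical.
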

\begin{proof}
Consider the $i^{th}$ machine $M_i$ ($i \in \{1,\ldots,m\}$) in $S$, let $\J_i$ denote the set of jobs scheduled on $M_i$ in $S$.
The late work on $M_i$ is $\max\{0,\sum_{j\in\J_k}p_j-d\}$, thus $Y=\sum_{k=1}^m \max\{0,\sum_{j\in\J_k}p_j-d\}$.
On the other hand, observe that the jobs of $\J_i$ are mapped to those jobs of the \rlp~that start at time point $i-1$ in $S'$.
The total resource requirement of these jobs  exceeds $L$ by  $\max\{0,\sum_{j\in\J_k}a_j-L\}$, thus the objective function value of $S'$ is $\sum_{i=1}^{C} \max\{0,\sum_{j\in\J_i}a_j-L\}=\sum_{i=1}^{m}\max\{0,\sum_{j\in\J_k}p_j-d\}=Y$,
since $L = d$, $C = m$, and $p_j= a_j$ by the mapping defined above.
\end{proof}
The above claims prove the theorem.
\end{proof}

\section{Inapproximability of $P2|d_j=d|c'+Y$}\label{sec:inapprox}
In this section we prove Theorem \ref{thm:inapprox}.

\begin{proof}[Proof of Theorem \ref{thm:inapprox}]
Let $c'$ be a fixed  positive rational number, and $\varepsilon > 0$  an arbitrarily small positive number. We show that if there is a polynomial time $\left(\frac{c'+1}{c'}-\varepsilon\right)$-approximation algorithm for $P2|d_j=d|c'+Y$ then we can decide in polynomial time any instance of the  PARTITION problem, which is an NP-hard decision problem \cite{garey1979computers}.
The latter problem is as follows:
\vskip 5pt
\noindent PARTITION: Given a set of $n$ items with positive integer item sizes $e_1,\ldots,e_n$, and one more positive integer $E$ such that $\sum_{i=1}^n e_i:=2E$. Question: does there exist a subset $H$ of the items such that $\sum_{i \in H} e_i:=E$?
\vskip 5pt

Consider an arbitrary instance of PARTITION, the corresponding instance of $P2|d_j=d|c'+Y$ has 2 machines, $n$ jobs with processing times $p_j:=e_j$, $j=1,2,\ldots,n$, and common due date  $d:=E$.

\begin{claim}
The answer to the instance of PARTITION is `yes' if and only if there exists a schedule of objective function value $c'$ in the corresponding instance of $P2|d_j=d|c'+Y$.
\end{claim}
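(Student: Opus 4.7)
The plan is to observe that, because $Y \geq 0$ by definition and the objective is $c' + Y$, a schedule has objective value $c'$ if and only if its late work $Y$ equals $0$. Hence the claim is equivalent to the statement that the PARTITION instance is a `yes' instance if and only if the constructed two-machine scheduling instance admits a schedule with $Y = 0$.

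For the forward direction, suppose there exists $H \subseteq \{1,\ldots,n\}$ with $\sum_{i \in H} e_i = E$. I would construct a schedule $S$ by assigning every job $j$ with $j \in H$ to machine 1 and every other job to machine 2, scheduled consecutively from time $0$ on each machine. Since $p_j = e_j$, the total load on machine 1 is $E$ and the total load on machine 2 is $2E - E = E$, and since $d = E$, each term $\max\{0, \sum_{j \in J_i(S)} p_j - d\}$ vanishes for $i = 1, 2$. Therefore $Y(S) = 0$ and the objective value is $c'$.

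For the converse, let $S$ be any feasible schedule with $c' + Y(S) = c'$, i.e., $Y(S) = 0$. By the definition of late work, each summand $\max\{0, \sum_{j \in J_i(S)} p_j - d\}$ must be zero, which forces $\sum_{j \in J_i(S)} p_j \leq d = E$ for both $i = 1, 2$. Combined with $\sum_{j=1}^n p_j = 2E$, this yields $\sum_{j \in J_i(S)} p_j = E$ for $i = 1, 2$, so the set $H := \{j : \mu_j(S) = 1\}$ witnesses a solution to the PARTITION instance.

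I do not anticipate a genuine obstacle in proving the claim itself; it is a direct ``$Y = 0$'' gap argument. The only minor point to check is that the scheduling instance is well-formed and the constructed schedule is feasible, which is automatic since the jobs are independent, the due date is common, and the jobs on each machine can be placed consecutively starting at time $0$ without overlap or idle time.
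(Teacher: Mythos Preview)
Your proof is correct and follows essentially the same approach as the paper: both directions hinge on the observation that the objective equals $c'$ precisely when $Y=0$, and then use the fact that the two machine loads sum to $2E$ to conclude that zero late work forces each load to be exactly $E=d$. The arguments are effectively identical.
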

\begin{proof}
Consider any instance of PARTITION, and the corresponding instance of $P2|d_j=d|c'+Y$.
Suppose the PARTITION problem instance admits a solution, i.e., there is a subset  $H$ of the items such that $\sum_{i\in H} e_i = E$.
In the corresponding instance of $P2|d_j=d|c'+Y$, schedule the jobs corresponding to the items in $H$ on the first machine, and the remaining jobs on the second machine in any order, without idle times. Then on both machines all jobs finish by $E$, thus $Y = 0$. Hence,  the value of the objective function $c'+Y$ is $c'$.

Conversely, suppose the scheduling problem instance admits a schedule of objective function value $c'$, then there is no late work in this schedule, thus each machine is working in $[0,d]$, because $\sum_{j=1}^n p_j= \sum _{i=1}^n e_i=2d = 2E$.
This means that the sum of the items that correspond to jobs scheduled on the first machine is $d=E$, thus the answer to the instance of PARTITION is 'yes'.
\end{proof}

Since all of the job processing times are integer numbers, there exists an optimal schedule such that all the jobs start at integer time points, and thus $Y \in \mathbb{Z}_{\geq 0}$.
Suppose there exists a $\left(\frac{c'+1}{c'}-\varepsilon\right)$-approximation algorithm for $P2|d_j=d|c'+Y$, then we can  decide in polynomial time with this algorithm whether the optimum value of an instance of our scheduling problem is $c'$ or at least $c'+1$. Therefore, we can decide in polynomial time the corresponding instance of PARTITION, which is impossible unless $P=NP$.
\end{proof}

\section{A PTAS for $P|d_j = d, n_i \leq N|X$} 
\label{sec:earlyPTAS}
 In this section we describe a PTAS for $P|d_j = d, n_i \leq N|X$. Note that the machine capacity $N$ is a positive integer such that $m \cdot N \geq n$, where $n$ is the number of the jobs, and $m$ is the number of identical parallel machines.
In fact, we will devise two algorithms (both paramterized by $\varepsilon$), and we will run both of them on the same input, and finally, we will choose the better of the two schedules obtained as the output of the algorithm.
After some preliminary observations, we will describe the two algorithms along with the proofs of their soundness, and in the end we combine them to prove Theorem~\ref{thm:ptas}.

Throughout this section,  $S^*$ denotes an optimal schedule for an instance of $P|d_j = d, n_i \leq N|X$.

\subsection{Family of algorithms for the case $X(S^*) \geq \varepsilon\cdot m \cdot d$}
\label{sec:earlyXbig}
In this section we describe a family of algorithms $\{\mathcal{A}_\varepsilon\ |\ \varepsilon > 0\}$, such that $\mathcal{A}_\varepsilon$ is a factor $(1-4\varepsilon)$ approximation algorithm  for the problem $P|d_j = d, n_i \leq N|X$ under the condition $X(S^*) \geq \varepsilon \cdot m \cdot d$.

We start by observing that if a job starts after $d$ then we do not have to deal with its exact starting time and with its machine assignment, because the total processing time of this job is late work.
We  can schedule these jobs from any time point after $d$ on any machine where we do not violate the machine capacity constraints.

Let $\varepsilon> 0$ be fixed.
We divide the set of jobs into three subsets, huge, big and small.
The set of {\em huge jobs\/} is
$\mathcal{H} := \{ j \in \J\ |\ p_j \geq d\}$, the set of {\em big jobs\/} is $\mathcal{B} := \{ j \in \J\ |\ \varepsilon^2 d \leq p_j < d\}$, and the remaining jobs are {\em small\/}. 

\begin{prop}
If there are at least $m$ huge jobs, then scheduling $m$, arbitrarily chosen huge jobs on $m$ distinct machines, and the rest of the jobs arbitrarily, yields an optimal schedule both for the maximum early work  and the minimum late work objectives.\label{prop:huge1}
\end{prop}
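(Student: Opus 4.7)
The plan rests on a simple upper bound. For any feasible schedule $S$, each machine $i$ contributes $\min\{d, \sum_{j \in J_i(S)} p_j\}$ to the early work, since by the definition of $Y$ we have $X(S) = p_{sum} - Y(S) = \sum_{i=1}^m \min\{d, \sum_{j \in J_i(S)} p_j\}$. Consequently $X(S) \leq m \cdot d$, and equivalently $Y(S) \geq p_{sum} - m \cdot d$. My strategy is to argue that the proposed schedule matches both bounds simultaneously, regardless of which $m$ huge jobs are chosen and how the rest are distributed, provided the machine capacity constraint $n_i \leq N$ is respected.

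First, I would verify that such a schedule exists at all. Placing one huge job on each of the $m$ machines consumes one of the $N$ slots per machine, leaving $m(N-1) = mN - m$ free slots for the remaining $n-m$ jobs. The standing assumption $m \cdot N \geq n$ yields $mN - m \geq n - m$, so the remaining jobs can be distributed among the $m$ machines without violating capacity; the jobs assigned to each machine are then processed contiguously from time $0$, as stipulated in the problem description.

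Next, I would compute the early work of any such schedule. Each machine contains at least one huge job, which alone has $p_j \geq d$, hence $\sum_{j \in J_i(S)} p_j \geq d$ and $\min\{d, \sum_{j \in J_i(S)} p_j\} = d$ for every $i$. Summing over the $m$ machines gives $X(S) = m \cdot d$, attaining the upper bound, so the schedule is optimal for the early work maximization objective. By equation (\ref{eq:lateXY}), the late work $Y(S) = p_{sum} - m \cdot d$ attains the lower bound, and the same schedule is therefore optimal for the late work minimization objective as well.

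No serious obstacle arises in this argument. The only point that needs a sentence is the existence of a feasible assignment of the remaining jobs, which is handled by the counting inequality derived from the standing assumption $m \cdot N \geq n$.
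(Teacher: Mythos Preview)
Your proof is correct and follows essentially the same approach as the paper: show that the constructed schedule achieves $X = m\cdot d$, observe that this is the maximum possible early work, and invoke equation~(\ref{eq:lateXY}) for the late work objective. Your treatment is in fact more careful than the paper's, which omits both the explicit upper bound $X(S)\leq m\cdot d$ and the feasibility check for distributing the remaining $n-m$ jobs under the capacity constraint.
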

\begin{proof}
Let $S'$ be the schedule constructed as described in the statement of the proposition.
Then  $X(S') = m\cdot d$, which is the maximum possible early work. By equation (\ref{eq:lateXY}), $S'$ has minimum late work as well, thus it is optimal for both objective functions.
\end{proof}

\begin{prop}
If $|\mathcal{H}| \leq m-1$, then there exists an optimal schedule for the maximum early work as well as for the minimum late work objectives such that the huge jobs are scheduled on $|\mathcal{H}|$ distinct machines.\label{prop:huge2}
\end{prop}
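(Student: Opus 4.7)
The plan is to begin with any optimal schedule $S^*$ and perform a sequence of local exchanges, each of which leaves the early work unchanged and respects the capacity bound $N$, until the huge jobs occupy $|\mathcal{H}|$ distinct machines; by the identity~(\ref{eq:lateXY}), optimality for the early work automatically yields optimality for the late work objective, so we only need to track early work. A convenient preliminary observation is that, since we may assume without loss of generality that no machine is idle at time $0$ and idle times between jobs are forbidden, the early work contribution of a machine $M_i$ equals $\min\{d,\sum_{j\in J_i(S^*)}p_j\}$; in particular, any machine containing at least one huge job contributes exactly $d$, the largest possible per-machine contribution.

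Now suppose $S^*$ has some machine $M_i$ carrying two huge jobs $j_1,j_2$. Because $|\mathcal{H}|\leq m-1$, pigeonhole forces some machine $M_k$ to contain no huge job at all. I would move $j_2$ from $M_i$ to $M_k$; if this would overfill $M_k$ (that is, $|J_k(S^*)|=N$), I would additionally migrate one of the non-huge jobs of $M_k$ into the slot on $M_i$ just freed by removing $j_2$. The resulting schedule is feasible, both $M_i$ and $M_k$ still contain a huge job and hence contribute exactly $d$ each to the early work, while the contributions from the remaining machines are untouched. Hence the new schedule is still optimal.

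Finally, to iterate I would use the non-negative integer potential $\phi(S):=\sum_{i=1}^m \max\{0,h_i(S)-1\}$, where $h_i(S)$ denotes the number of huge jobs on machine $M_i$ in $S$. A direct check shows that each exchange strictly decreases $\phi$ by one, and $\phi(S)=0$ exactly when every machine hosts at most one huge job. Thus the procedure terminates in at most $|\mathcal{H}|$ steps, yielding the claimed optimal schedule. The main subtlety lies in the capacity bookkeeping: the back-swap is always available because $M_k$ contains only non-huge jobs (so any of them is a valid candidate to move), and the prior removal of $j_2$ guarantees the necessary free slot on $M_i$.
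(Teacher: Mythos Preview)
Your argument is correct and uses the same exchange step as the paper: move a second huge job from an overloaded machine to a huge-job-free machine, swapping back a non-huge job if the capacity bound would be violated, and observe that both affected machines now contribute exactly $d$. The only difference is packaging: the paper picks an optimal schedule maximizing the number of machines carrying a huge job and derives a one-line contradiction, whereas you iterate the exchange and track the potential $\phi$; both are standard ways to wrap the same local move, and your version has the mild advantage of being explicitly constructive.
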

\begin{proof}
Let $S^*$ be an optimal schedule for the early work (as well as for the late work) objective with the maximum number of machines on which a huge job is scheduled. Indirectly, suppose less than $|\mathcal{H}|$ machines process at least one huge job, hence, there exists a machine $M_1$ processing at least two huge jobs, say $j_1$ and $j_2$, in this order.
Since there are at most $m-1$ huge jobs, there exists a machine $M^*$ (in fact there are at least two), which does not process any huge jobs. 
If less than $N$ jobs are scheduled on $M^*$, then move job $j_2$ from $M_1$ to $M^*$, otherwise swap job $j_2$ with any of the jobs scheduled on $M^*$, and let $S'$ be the resulting schedule.
Clearly, the machine capacities are respected by $S'$, and both of the machines $M^*$ and $M_1$ work in the period $[0,d]$ in $S'$, while the work assigned to any other machine is the same in both schedules.
Hence, $X(S') \geq X(S^*)$. Therefore, $S'$ is optimal for the early work objective, and by equation (\ref{eq:lateXY}), for the late work objective as well. However, in $S'$ more machines process at least one huge job than in $S^*$, a contradiction.
\end{proof}

From now on, we assume that there are at most $m-1$ huge jobs, and we fix an optimal schedule $S^*$ in which the huge jobs are scheduled on distinct machines.

Our algorithm has three main phases: first, we schedule  all of  the huge jobs, and some of the big jobs such that they get a starting time smaller than $d$, then we schedule some of the small jobs such that they get a starting time smaller than $d$, and finally, we schedule the remaining big and small jobs, if any, arbitrarily while respecting the machine capacity constraints.

For each big job $j$ we round down its processing time $p_j$ to the greatest integer $p'_j:=\lceil \varepsilon^2 d (1+\varepsilon)^k \rceil$ ($k\in \mathbb{Z}$) such that $p'_j\leq p_j$.
Since we have $\varepsilon^2 d \leq p_j < d$ for each big job $j$, the number of the different $p'_j$ values is bounded by the constant $k_1:=\lfloor\log_{1+\varepsilon}(1/\varepsilon^2)\rfloor+1$ that depends on the fixed $\varepsilon$ only.
Let $\B_1,\B_2,\ldots,\B_{k_1}$ denote the  sets of the big jobs with the same rounded processing times, i.e., $\B_h:=\{j\in \J: p'_j=\lceil (1+\varepsilon)^{h-1}\cdot\varepsilon^2 d \rceil\}$ ($\B_h=\emptyset$ is possible).

For each machine without a huge job, we guess the number of the big jobs from each set $\mathcal{B}_h$ that start before $d$. 
This guess can be described by an {\em assignment\/} $A$, which consists of $k_1$ numbers $(\gamma_1,\gamma_2,\ldots,\gamma_{k_1})$, where $\gamma_h$ describes the number of the jobs from $\B_h$.     
A big job assignment $(\gamma_1,\gamma_2,\ldots,\gamma_{k_1})$ is {\em feasible},  if it does not violate the constraint on the number of the jobs on a machine, i.e., $\sum_{h=1}^{k_1} \gamma_h\leq N$, and all the selected jobs can be started before $d$, i.e., scheduling them in non-decreasing processing time order, each of the selected big job starts before $d$.
Let $k_2$ be the number of possible big job assignments. Since  the total number of big jobs that may start before $d$ on a machine is at most $\lfloor 1/\varepsilon^2\rfloor$, we have $k_2 \leq k_1^{\lfloor 1/\varepsilon^2\rfloor}$.
Let $A_1,A_2,\ldots, A_{k_2}$ denote the different feasible big job assignments.

A {\em layout\/} is a $k_2$ tuple $(t_1,t_2,\ldots,t_{k_2})$ that  specifies for each feasible assignment the number of the  machines that uses it. Let $\gamma_{ih}$ denote the number of big jobs from $\mathcal{B}_h$ assigned by $A_i$.
A layout is {\em feasible} if and only if $\sum_{i=1}^{k_2} t_i \gamma_{ih} \leq |\mathcal{B}_h|$ for each $h = 1,\ldots,k_1$.
 The number of feasible tuples is bounded by the number of non-negative, integer solutions of the inequality $\sum_{i=1}^{k_2} t_i\leq m-|\mathcal{H}|$, which is bounded by $\binom{m-|\mathcal{H}|+k_2}{k_2}$, a polynomial in the size of the input, since $k_2$ is a constant (that depends on $\varepsilon$ only).
In  Algorithm $A$, we examine each big job layout and get a complete schedule for each of them.

\medskip

\textbf{Algorithm $\mathbf{A}$}
\begin{enumerate}
\item Determine the set of feasible layouts.
\item For each layout $t$, perform the steps \ref{alg_a:assign}--\ref{alg_a:rem}.\label{alg_a:loop}
\item Assign the huge jobs of $\mathcal{H}$ to machines $M_1\ldots,M_{|\mathcal{H}|}$ arbitrarily, and big jobs to the remaining $m-|\mathcal{H}|$ machines according to $t$ ($t_i$ machines use assignment $A_i$)\label{alg_a:assign}
\item On each machine, schedule the assigned jobs from time point 0 on in arbitrary order.
\item If $N\geq n$, then invoke Algorithm $B$, otherwise invoke Algorithm $C$ to schedule small jobs. \label{alg_a:invoke}
\item Schedule the remaining jobs (small and big, if any) on the machines arbitrarily such that no machine receives more than $N$ jobs  in total (including the pre-assigned huge and big jobs).\label{alg_a:rem}
\item Output  $S_A$, which is the best schedule found in steps (\ref{alg_a:loop})-(\ref{alg_a:rem}).
\end{enumerate}

Now we turn to Algorithms $B$ and $C$ for scheduling small jobs. Algorithm $B$ is a simple greedy method which works only if there are no machine capacity constraints,  i.e., $N \geq n$.
 
\medskip

\textbf{Algorithm $\mathbf{B}$}

Input: partial schedule of big jobs
\begin{enumerate}
\item For $i=1,\ldots,m$ do:
\item Schedule a maximal subset of small jobs on machine $M_i$ after the big jobs without idle time such that no small job finishes after $d$.\label{alg_b:sch}
\end{enumerate}

Observe that the above method may assign a lot of small jobs to a machine, thus it may not yield a feasible schedule if  $N<n$ . 

Algorithm $C$ is much more complicated. 
Let  $\J^{small}$ denote the set of small jobs, $P^{small}_i\geq 0$ the idle time on machine $i$ before $d$, and $n^{small}_i$  the number of the jobs that can be scheduled on machine $i$ after the partial schedule of big jobs, i.e., $n^{small}_i$ is the difference between $N$ and the number of the big jobs assigned to machine $M_i$. Note that $P_i^{small} = 0$ if a huge job is assigned to machine $M_i$.

Our goal is to maximize the early work of the small jobs for a fixed assignment of big and huge jobs.
To simplify our problem, we only want to maximize  the total processing time of the small jobs that a machine completes before $d$.
This may decrease the objective function value of the final schedule, but we will show that this error is negligible. 

We can model the above problem with an integer program.
We introduce $n\cdot(m+1)$ binary variables $x_{ij}$ ($i=0,1,2,\ldots,m$, $j=1,2,\ldots,n$), where $x_{0,j}=1$ means that we do not schedule job $j$ to any machine before $d$, while in case of $1\leq i\leq m$, $x_{i,j}=1$ means that job $j$ will be scheduled on machine $i$, and will be completed not later than $d$.
\begin{align}
\max &\sum_{i=1}^m\sum_{j\in\J^{small}} x_{i,j}p_{j}\label{eq:obj}\\
&s.t.\notag\\
&\sum_{j\in \J^{small}} x_{i,j}p_j\leq P^{small}_i,&&  i=1,\ldots,m,
\end{align}
\begin{align}
&\sum_{j\in \J^{small}} x_{i,j}\leq n^{small}_i,& & i=1,\ldots, m,\\
&\sum_{i=0}^m x_{i,j}= 1,&&  j\in \J^{small},\\
&x_{i,j}\in\{0,1\}, && i=0,\ldots,m,\;j\in\J^{small}.\label{eq:ip}
\end{align}
We  get the LP-relaxation of the above integer program by replacing  $x_{i,j}\in\{0,1\}$ with $x_{i,j}\geq 0$ in the constraints (\ref{eq:ip}).

\medskip

\textbf{Algorithm $\mathbf{C}$}

Input: partial schedule of big jobs
\begin{enumerate}
\item Determine the values $P^{small}_i$, $n^{small}_i$ for $i=1,\ldots,m$.
\item Solve the LP-relaxation of  (\ref{eq:obj})--(\ref{eq:ip}), and let $\bar{x}$ be a basic optimal solution.
\item For $i=1,\ldots,m$, if $\bar{x}_{i,j}=1$ for a job $j$, then assign that job to machine $i$.
\item For each machine, schedule the assigned jobs right after the big jobs without idle times in arbitrary order.
\end{enumerate}
Observe that fractional jobs of the optimal LP solution are not assigned to any machine by Algorithm $C$, but they will be scheduled by the Step~\ref{alg_a:rem} of Algorithm $A$.

The proofs of the following two claims easily follow from the definitions.
\begin{prop} \label{prop:feas}
$S_A$ is feasible.
\end{prop}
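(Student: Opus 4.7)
The plan is to walk through each step of Algorithm~$A$ and verify the three ingredients of feasibility: every job receives a machine and a start time, no two jobs overlap on any machine, and at most $N$ jobs are assigned to any machine. Since $S_A$ is defined as the best schedule obtained while looping over feasible layouts, it suffices to verify feasibility of the schedule built for one fixed feasible layout.

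Non-overlap on a machine is immediate in every phase, since jobs are always appended consecutively starting at time $0$ with no idle times between them. Covering every job is also routine: the huge jobs are placed in Step~\ref{alg_a:assign} (which is possible because, by Proposition~\ref{prop:huge2}, we may assume $|\mathcal{H}|\leq m-1$), the big jobs requested by the layout are placed in the same step, the small jobs picked by Algorithm~$B$ or Algorithm~$C$ are placed in Step~\ref{alg_a:invoke}, and everything left over — leftover big jobs and the small jobs not selected by the LP — is placed in Step~\ref{alg_a:rem}.

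The real content is the capacity bound. In Step~\ref{alg_a:assign} this is built into the definitions: each feasible assignment $A_i$ satisfies $\sum_h \gamma_{ih}\leq N$, so no machine is overloaded by its pre-assigned big jobs, and the feasibility of the layout, $\sum_i t_i\gamma_{ih}\leq|\mathcal{B}_h|$ for every $h$, guarantees that we never try to place more big jobs of a rounded size than actually exist. In the small-job phase, Algorithm~$B$ is invoked only under the hypothesis $N\geq n$, so capacity is vacuous there; Algorithm~$C$ enforces $\sum_j x_{i,j}\leq n^{small}_i$ in its LP, so the integer variables with $\bar{x}_{i,j}=1$ automatically respect the remaining slot count on each machine. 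Finally, Step~\ref{alg_a:rem} succeeds because of the standing assumption $m\cdot N\geq n$: while any job is still unplaced, strictly fewer than $n\leq m\cdot N$ slots are filled, so at least one machine still has a free slot where the next job can be appended.

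There is no deep obstacle here; the statement is essentially a bookkeeping claim, and the only mildly subtle point is the interaction between the fractional output of Algorithm~$C$ and the leftover pool processed in Step~\ref{alg_a:rem}, which is resolved by the same counting argument: fractional jobs are simply returned to the residual pool, and the residual capacity across all $m$ machines is large enough to absorb them.
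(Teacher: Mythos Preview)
Your proof is correct and matches the paper's own treatment: the paper simply states that the result ``easily follows from the definitions'' and omits the details, while you have carefully unpacked those definitions. The only content is the bookkeeping you identified, and your verification of the capacity constraints across Steps~\ref{alg_a:assign}, \ref{alg_a:invoke}, and~\ref{alg_a:rem} is exactly what the paper leaves implicit.
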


\begin{prop}\label{prop:run_time_B}
The  time complexity of Algorithm $B$ is polynomially bounded in the size of the input.
\end{prop}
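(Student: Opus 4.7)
The claim is that Algorithm $B$ runs in polynomial time, which is essentially a routine run-time accounting once we fix a concrete implementation of Step~\ref{alg_b:sch}. The plan is therefore to exhibit a simple greedy implementation that produces a maximal subset in near-linear time, and then bound the total work by summing over the $m$ iterations of the outer loop.

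First I would describe the implementation of Step~\ref{alg_b:sch}. For machine $M_i$, let $r_i := d - \sum_{j \text{ big or huge on } M_i} p_j$ denote the idle time before $d$ after the pre-assigned jobs; this value can be computed in $O(n)$ time from the input to Algorithm $B$. Starting from the list of small jobs not yet assigned to any machine, iterate through them in arbitrary order, and whenever the current job $j$ satisfies $p_j \le r_i$, assign $j$ to $M_i$, remove it from the pool of unassigned small jobs, and update $r_i \leftarrow r_i - p_j$; otherwise skip it. After one pass over the remaining small jobs, by construction no unassigned small job fits into the remaining $r_i$ time on $M_i$, so the resulting subset scheduled on $M_i$ is maximal, as required by Step~\ref{alg_b:sch}.

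Next I would bound the running time. A single iteration of the outer loop performs at most $O(n)$ arithmetic operations: one pass through the (shrinking) list of unassigned small jobs, plus an initial $O(n)$ computation of $r_i$. Since the outer loop runs $m$ times, the total time is $O(mn)$, which is polynomial in the input size. No sorting is required because maximality (rather than maximum total processing time) is all that Step~\ref{alg_b:sch} demands.

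There is no real obstacle here; the only thing to be careful about is distinguishing \emph{maximal} from \emph{maximum}. Because only maximality is required, a single greedy sweep per machine suffices and there is no need to solve a knapsack-type subproblem. Hence the bound $O(mn)$ proves Proposition~\ref{prop:run_time_B}.
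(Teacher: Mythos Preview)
Your proposal is correct; the paper itself does not give a proof and simply remarks that the proposition follows easily from the definitions, so your routine $O(mn)$ accounting via a greedy one-pass implementation of Step~\ref{alg_b:sch} is exactly the kind of argument implicitly intended. Your explicit distinction between \emph{maximal} and \emph{maximum} is the only point worth making, and you have made it.
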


\begin{prop}\label{prop:run_time_C}
The  time complexity of Algorithm $C$ is polynomially bounded in the size of the input.
\end{prop}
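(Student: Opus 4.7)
The plan is to verify that every step of Algorithm~$C$ runs in time polynomial in the input size, which reduces to three easy observations. First, determining the idle time $P^{small}_i$ and the residual capacity $n^{small}_i$ for each machine $i = 1, \dots, m$ requires a single pass over the partial schedule of the big and huge jobs produced in Step~\ref{alg_a:assign} of Algorithm~$A$, so this preprocessing clearly takes $O(nm)$ elementary operations.

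Next, I would argue that solving the LP-relaxation of (\ref{eq:obj})--(\ref{eq:ip}) is the dominant step, and that it still runs in polynomial time. The program has exactly $n(m+1)$ variables and only $2m + n$ (in)equality constraints, so its total encoding length is polynomial in the encoding length of the scheduling instance: the coefficients are just the processing times $p_j$ together with the previously computed integers $P^{small}_i$ and $n^{small}_i$. By the classical polynomial-time solvability of linear programming, for instance via the ellipsoid method or any interior-point algorithm, a basic optimal solution $\bar{x}$ can be obtained in time polynomial in the encoding length of the LP, hence polynomial in the input size of the scheduling instance. Should the chosen LP solver return a non-basic optimum, one can convert it to a basic one of the same objective value in polynomial time by the standard basis-crashing procedure.

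Finally, reading off the components $\bar{x}_{i,j}$ equal to $1$ and appending the corresponding jobs to the partial schedule of big jobs on their designated machines is linear in $nm$. Summing the three contributions yields an overall running time bounded by a polynomial in the input size, proving the proposition. No step presents a real obstacle; the only point worth emphasising is that one must invoke a polynomial-time LP algorithm (rather than, e.g., the simplex method) to guarantee the claimed bound.
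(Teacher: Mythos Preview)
Your argument is correct and matches the paper's own proof essentially line for line: both observe that the LP has polynomially many variables and constraints, invoke a polynomial-time LP algorithm (the paper specifies an interior-point method followed by Megiddo's method to extract a basic solution, which is exactly your ``basis-crashing'' step), and note that the remaining steps of Algorithm~$C$ are linear. There is nothing to add.
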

\begin{proof}
We can determine a basic solution of a linear  program with $nm$ variables and $n+2m$ constraints in two steps.
First, apply a polynomial time interior-point algorithm to find a pair of primal-dual optimal solutions, and then, we can use Megiddo's method to determine a basic solution $\bar{x}$ for the primal program, see e.g., \citet{wright1997primal}.
The other steps of Algorithm $C$ require linear time.
\end{proof}

\begin{prop}\label{prop:run_time_A}
The time complexity of Algorithm $A$ is polynomially bounded in the size of the input.
\end{prop}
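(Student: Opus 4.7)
The plan is to decompose Algorithm $A$ into its outer loop (over feasible layouts) and the body executed in each iteration, then argue that both the number of iterations and the work per iteration are polynomially bounded in the input size.

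First I would bound the number of feasible layouts. Recall that $\varepsilon$ is fixed, so $k_1 = \lfloor \log_{1+\varepsilon}(1/\varepsilon^2)\rfloor + 1$ is a constant, and hence $k_2 \leq k_1^{\lfloor 1/\varepsilon^2 \rfloor}$ is also a constant. As observed in the paragraph preceding Algorithm $A$, the number of feasible layouts is at most the number of non-negative integer solutions of $\sum_{i=1}^{k_2} t_i \leq m - |\mathcal{H}|$, which is $\binom{m - |\mathcal{H}| + k_2}{k_2}$. Since $k_2$ is a constant depending only on $\varepsilon$, this binomial coefficient is $O(m^{k_2})$, polynomial in the input size. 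Enumerating these tuples and checking feasibility (i.e., $\sum_i t_i \gamma_{ih} \leq |\mathcal{B}_h|$ for each $h$) is also clearly polynomial.

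Next I would account for the work done in each iteration. Step~\ref{alg_a:assign} assigns at most $n$ jobs to machines, and Step~4 schedules at most $n$ jobs in order; both take $O(n)$ time. Step~\ref{alg_a:invoke} invokes either Algorithm $B$ or Algorithm $C$, both of which run in polynomial time by Propositions~\ref{prop:run_time_B} and~\ref{prop:run_time_C}. Step~\ref{alg_a:rem} fills the remaining jobs onto machines while respecting the capacity bound $N$, which can be done greedily in $O(nm)$ time. Finally, comparing the objective values of the schedules produced in the iterations and selecting the best takes time linear in the number of layouts.

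Multiplying the polynomial number of layouts by the polynomial work per iteration yields an overall polynomial running time. I do not expect any substantial obstacle here: the only subtle point is ensuring that the bound on $k_2$ used above is truly independent of the instance size, which is already guaranteed by the fact that at most $\lfloor 1/\varepsilon^2 \rfloor$ big jobs can be started before $d$ on any machine (since each big job has processing time at least $\varepsilon^2 d$), so the number of coordinates and the range of each coordinate in a feasible assignment depend only on $\varepsilon$.
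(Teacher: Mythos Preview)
Your proposal is correct and follows essentially the same approach as the paper: bound the number of feasible layouts by $\binom{m-|\mathcal{H}|+k_2}{k_2}$, which is polynomial since $k_2$ depends only on $\varepsilon$, and then bound the per-iteration work using Propositions~\ref{prop:run_time_B} and~\ref{prop:run_time_C}. Your write-up is in fact more detailed than the paper's two-sentence proof, but the structure and key observations are identical.
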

\begin{proof}
Recall that the number of the feasible layouts is polynomial (at most $\binom{m+k_2}{k_2}$). 
Each of the steps \ref{alg_a:assign}-\ref{alg_a:rem} requires  $O(n m)$ time, except Step \ref{alg_a:invoke} if it invokes Algorithm $C$, but it is also polynomial due to Proposition \ref{prop:run_time_C}. 
\end{proof}

Without loss of generality, we assume that in  $S^*$ the huge and big jobs precede the small jobs on each machine, and the big jobs are scheduled in non-decreasing processing time order on each machine. 
We introduce an intermediate schedule $S_{int}$: it is the same as $S^*$ except that the processing time of each big job is rounded as in Algorithm $A$. That is, the processing time of each big job is rounded down to the greatest number of the form $\lceil \varepsilon^2 d (1+\varepsilon)^k\rceil$, ($k\in \mathbb{Z}$), and after rounding we re-schedule the jobs on each machine in the same order as in $S^*$, but with the decreased processing times of the big jobs. 
By considering those big jobs on the machines that start before $d$ in $S_{int}$, we can uniquely identify an assignment of big jobs for each machine. Therefore, we can determine the layout $t^*$ of the big jobs that start before $d$ in $S_{int}$. 
Now we state and prove the main result of this section.
\begin{theorem}
If $X(S^*) \geq \varepsilon \cdot m\cdot d$, then   Algorithm $A$ is a factor $(1-4\varepsilon)$ approximation algorithm for $P|d_j = d,n_i \leq N|X$.\label{thm:earlyXbig}
\end{theorem}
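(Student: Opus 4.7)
The plan is to exhibit a specific feasible layout considered by Algorithm $A$ whose resulting schedule $S_{t^*}$ satisfies $X(S_{t^*}) \geq (1-4\varepsilon) X(S^*)$; since Algorithm $A$ returns the best layout's output, this proves the theorem. The layout is the one $t^*$ already derived from $S_{int}$ just before the theorem; by Proposition~\ref{prop:huge2} I assume that in $S^*$ the huge jobs occupy distinct machines. Among the many concrete big-job placements matching $t^*$, I analyze the one in which, on each machine $M_i$, Algorithm $A$ places exactly those bigs that start before $d$ in $S_{int}$, so that $S_{t^*}$, $S_{int}$ and $S^*$ agree on which big jobs sit on which machine in the ``early'' part.

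I then compare $X(S_{t^*})$ with $X(S^*)$ machine by machine. Huge-job machines contribute $d$ in both. On a non-huge machine $M_i$, a case distinction shows that the big-job contribution of $S_{t^*}$ is at least that of $S^*$: if the chosen bigs on $M_i$ already have rounded sum $\geq d$, then their actual sum is also $\geq d$ and both schedules put early work $d$ from bigs on $M_i$; otherwise every big on $M_i$ fits before $d$ already in $S_{int}$, and since the actual $p_b$'s dominate the $p'_b$'s, $S^*$ uses the same bigs on $M_i$, so the bigs contribute identically. For small jobs, the fully-early smalls of $S^*$ on $M_i$ form a feasible solution to the LP solved by Algorithm $C$ in the $S_{t^*}$ setting, because the idle time before $d$ and the remaining job-capacity on $M_i$ are at least as large in $S_{t^*}$ as in $S^*$ after the bigs are placed. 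Hence the LP optimum dominates the fully-early small contribution of $S^*$.

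To bound the total loss I would combine two observations. At most one straddling small per machine of $S^*$ is not captured by the fully-early comparison, contributing at most $m\varepsilon^2 d$ in total since each small has $p_j < \varepsilon^2 d$. Algorithm $C$'s basic optimal LP solution has at most $2m$ fractional $x_{i,j}$ variables by the standard constraint-count argument, each of processing time less than $\varepsilon^2 d$, contributing at most $2m\varepsilon^2 d$. Summing, $X(S^*) - X(S_{t^*}) = O(m\varepsilon^2 d)$, and the hypothesis $X(S^*) \geq \varepsilon m d$ converts this absolute error into an $O(\varepsilon)$ relative error, yielding the claimed factor after a careful count of constants.

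The main obstacle I expect is the big-job step. A layout specifies only the class counts, so Algorithm $A$ is free to pick any $\gamma_h$ jobs from each $\B_h$ on each machine, and different choices can shift $\sum_b p_b$ on a machine by up to $\varepsilon d$. To close this gap one must argue either that Algorithm $A$ may be taken to match the specific jobs used by $S_{int}$, or that any in-class permutation is absorbed by Algorithm $C$'s LP on the smalls, whose increased room on some machines compensates the deficit on others. Handling the ceiling in $p'_b = \lceil \varepsilon^2 d (1+\varepsilon)^k \rceil$ introduces further small additive discrepancies that must be folded into the $O(m\varepsilon^2 d)$ bookkeeping.
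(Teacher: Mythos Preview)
Your overall plan---fix the layout $t^*$ induced by $S_{int}$, bound the loss on big jobs and on small jobs separately, and convert the $O(m\varepsilon^2 d)$ absolute error into an $O(\varepsilon)$ relative error via the hypothesis $X(S^*)\ge\varepsilon m d$---is exactly the skeleton of the paper's proof. Your accounting for the small jobs (one straddling small per machine, at most $2m$ fractional variables in a basic LP solution) is precisely the content of Proposition~\ref{prop:XS_Xint}.

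The obstacle you flag at the end is real, and it is the one place where your route diverges from the paper's. You compare $S_{t^*}$ directly with $S^*$ and then have to worry that Algorithm~$A$ may pick different big jobs within a class than $S^*$ does; neither of your two proposed fixes (forcing Algorithm~$A$ to match the specific jobs, or hoping the LP absorbs the permutation) is what the paper does, and neither is obviously sound as stated. The paper instead compares $S$ with $S_{int}$, not with $S^*$, and pays for this upfront via $X(S_{int})\ge(1-\varepsilon)X(S^*)$ (from $p_j/(1+\varepsilon)<p'_j\le p_j$). The point of this detour is that in $S_{int}$ every big job carries its \emph{rounded} processing time, so the total rounded big load on a machine depends only on the class counts, not on the specific jobs chosen. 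Since actual times dominate rounded times, Algorithm~$A$'s big load satisfies $B_i^S\ge B_i^{int}$ regardless of which jobs it picks; the resulting surplus in big early work exactly offsets the corresponding shrinkage of the small-job room $P_i^{small}$, and the comparison $X(S)\ge X(S_{int})-3\varepsilon^2 md$ goes through without ever fixing the specific big jobs. That extra $(1-\varepsilon)$ factor is the fourth $\varepsilon$ in $(1-4\varepsilon)$, which is why your direct argument (if it worked) would give $(1-3\varepsilon)$.

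One smaller omission: you argue only through Algorithm~$C$, but the paper splits into two cases, handling $N\ge n$ with the greedy Algorithm~$B$ (Lemma~\ref{lem:X_N_big}) and reserving the LP-based Algorithm~$C$ for $N<n$ (Lemma~\ref{lem:X_N_small}). Your LP argument essentially covers the second case; the first case needs a separate, simpler sentence.
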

\begin{proof}
Recall that $S_{int}$ is the schedule obtained from $S^*$ by rounding down the processing time of each big job, and shifting the jobs to the left, if necessary, to eliminate any idle times (created by rounding) on the machines.
Since $p_j / (1+\varepsilon) < p'_j \leq p_j$, we have $X(S_{int}) \geq X(S^*)/(1+\varepsilon) \geq (1-\varepsilon)X(S^*)$.
Let $t^*$ be the layout of big jobs corresponding to $S_{int}$.
Algorithm $A$ will consider the layout $t^*$ at some iteration, and let $S$ be the schedule created from  $t^*$.
Since $X(S_A) \geq X(S)$, it suffices to prove that $X(S) \geq (1-4\varepsilon)X(S^*)$.
To achieve this, we proceed by proving a series of lemmas.
\begin{lemma}
If $N\geq n$ and $X(S^*) \geq \varepsilon \cdot m \cdot d$, then $X(S)\geq (1-\varepsilon)X(S^*)$.
\label{lem:X_N_big}
\end{lemma}
\begin{proof}
If Algorithm $B$ schedules all the small jobs when creating schedule $S$, then the only jobs finishing after $d$ can be big and huge jobs. Since the set of big and huge jobs that start before $d$ in schedule $S$ contains all the big and huge jobs that start before $d$ in schedule $S_{int}$, we get $X(S) \geq X(S_{int})$.

If there is at least one small job that remains unscheduled  by Algorithm $B$, then 
consider the early work in $S$. We know that the total processing time on each machine is at least $d(1-\varepsilon^ 2)$ due the the condition of Step~\ref{alg_b:sch} of Algorithm $B$.
Hence, $X(S) \geq md(1-\varepsilon^2)$.
Since $X(S) \leq X(S^*) \leq m\cdot d$, and $X(S^*) \geq \varepsilon\cdot m\cdot d$ by assumption, we derive
\[
X(S) \geq (1-\varepsilon^2) d \cdot m \geq (1-\varepsilon)X(S^*),
\]
as claimed.
\end{proof}

\begin{prop}
If $N<n$, then $X(S) \geq X(S_{int})-3\varepsilon^2\cdot d \cdot m$.
\label{prop:XS_Xint}
\end{prop}
For a proof, see the Appendix.
\begin{lemma}
If $N < n$ and $X(S^*) \geq \varepsilon \cdot m \cdot d$, then $X(S) \geq (1-4\varepsilon)X(S^*)$.
\label{lem:X_N_small}
\end{lemma}
\begin{proof}
By Proposition~\ref{prop:XS_Xint}, $X(S) \geq X(S_{int})-3\varepsilon^2\cdot d \cdot m$.
Therefore, using the assumption of the lemma, we derive
\[
X(S) \geq X(S_{int})-3\varepsilon^2\cdot d \cdot m  \geq X(S^*)(1-\varepsilon) - 3\varepsilon X(S^*) = (1-4\varepsilon) X(S^*).
\]
\end{proof}
Now we can finish the proof of Theorem~\ref{thm:earlyXbig}.
We have proved that Algorithm $A$ creates a feasible schedule $S_A$ (Proposition \ref{prop:feas}) in polynomial time (Proposition \ref{prop:run_time_A}) such that $X(S_A) \geq (1-4\varepsilon) X(S^*)$ (Lemmas \ref{lem:X_N_big}-\ref{lem:X_N_small}), thus the theorem is proved.
\end{proof}
Theorem~\ref{thm:earlyXbig} has a strong assumption, namely, $X(S^*) \geq \varepsilon\cdot m\cdot d$.
In the next section, we describe a complementary method, which works if $X(S^*) < \varepsilon\cdot m\cdot d$.

\subsection{The second approximation algorithm}
\label{sec:earlySmall}
We will show that if $X(S^*) < \varepsilon\cdot m\cdot d$, then
scheduling the jobs in non-increasing processing time order by list-scheduling  while respecting the capacity constraints of the machines yields an approximation Algorithm $B$oth for minimizing the late work and for maximizing the early work as well.
Recall the {\em list-scheduling\/} method of \citet{graham1969bounds} for scheduling jobs on parallel machines. It processes the jobs in a given order, and it always schedules the next job on the least loaded machine. 
In order to take into account the capacity constraints of the machines, we will use the following variant of list-scheduling. 

\medskip

\textbf{Algorithm LS}

Input: list of jobs, number of machines $m$, and common machine capacity $N$.
\begin{enumerate}
\item Let $n_i := 0$, and $L_i:= 0$ for $i=1,\ldots,m$.
\item Process the jobs in the order given by the list. When processing the next job $j$ from the list, choose the machine with minimum $L_i$ value  among those machines with $n_i < N$, and break ties arbitrarily. Let $i$ be the index of the machine chosen. Then set $t_j(S_{LPT}) = L_i$, $\mu_j(S_{LPT}) := i$, $L_i := L_i + p_j$ and $n_i := n_i+1$.
\item Return $S_{LPT}$.
\end{enumerate}
Let $S_{LPT}$ be the schedule obtained by list-scheduling for machines with capacities using the above job order.

\begin{theorem}
If $X(S^*) < \varepsilon \cdot m \cdot d$ and $\varepsilon \leq 1/3$, then $X(S_{LPT}) \geq (1-2\varepsilon) X(S^*)$ and $c\cdot p_{sum} + Y(S_{LPT}) \leq  (1+2\varepsilon/c)(c\cdot p_{sum} +  Y(S^*))$.\label{thm:earlyXsmall}
\end{theorem}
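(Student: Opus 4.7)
My plan is to prove only the first inequality $X(S_{LPT})\geq(1-2\varepsilon)X(S^*)$, since the second follows mechanically: from $Y(S_{LPT})-Y(S^*)=X(S^*)-X(S_{LPT})\leq 2\varepsilon X(S^*)\leq 2\varepsilon p_{sum}\leq (2\varepsilon/c)(cp_{sum}+Y(S^*))$ one immediately gets $cp_{sum}+Y(S_{LPT})\leq (1+2\varepsilon/c)(cp_{sum}+Y(S^*))$.

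For the first inequality, I would begin by observing that the least-loaded machine in $S_{LPT}$, whose final load I will write as $W_{\min}^{LPT}$, must satisfy $W_{\min}^{LPT}<d$: otherwise every machine has load at least $d$, so $X(S_{LPT})=m\cdot d$, and combined with the optimality bound $X(S_{LPT})\leq X(S^*)<\varepsilon md$ this contradicts $\varepsilon\leq 1/3$. Next I would invoke the classical LPT balancing argument machine by machine: for each $M_i$, let $\tilde{j}_i$ denote the last job that Algorithm LS placed on $M_i$. At the moment of placement, $M_i$ was the least-loaded machine with remaining capacity, so the load of $M_i$ just before $\tilde{j}_i$ is no larger than the simultaneous load of the minimum-load admissible machine; pushing the comparison forward in time (loads only grow) yields $W_i^{LPT}-W_{\min}^{LPT}\leq p_{\tilde{j}_i}$. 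Together with $W_{\min}^{LPT}<d$, this gives the per-machine excess bound $W_i^{LPT}-d<p_{\tilde{j}_i}$ for every machine $M_i$ with $W_i^{LPT}\geq d$, so summing over the over-loaded machines produces $Y(S_{LPT})<\sum_{i\in A_{LPT}} p_{\tilde{j}_i}$, a sum over distinct jobs.

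Converting this job-size sum into a $2\varepsilon X(S^*)$ bound is the heart of the argument. I would first use Propositions~\ref{prop:huge1}--\ref{prop:huge2} to isolate huge jobs ($p_j\geq d$): they end up on distinct machines in both $S_{LPT}$ and $S^*$ and contribute the same $|\mathcal{H}|\cdot d$ to both early-work values, so only non-huge jobs can carry non-trivial excesses. The hypothesis $X(S^*)<\varepsilon md$ simultaneously caps $|A^*|<\varepsilon m$ and, via $|A_{LPT}|\cdot d\leq p_{sum}$, controls the number of LPT over-loaded machines. A short case split on $p_{\max}$ versus $\varepsilon d$ should close the calculation: when $p_{\max}\leq\varepsilon d$ every term $p_{\tilde{j}_i}$ is tiny, so the aggregate excess is already small; when $p_{\max}>\varepsilon d$ the trivial bound $X(S^*)\geq\min(p_{\max},d)$ makes $X(S^*)$ large enough that a factor $2\varepsilon$ absorbs the accumulated excesses. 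The main technical obstacle I foresee is the capacity constraint $n_i\leq N$, which can invalidate the balance step when the min-load machine becomes saturated before some later $\tilde{j}_i$ is placed; my fix is to charge the excess to whichever machine was least-loaded and still admissible at the moment of placement, which preserves the per-machine inequality. The assumption $\varepsilon\leq 1/3$ enters at the very end, both to keep $(1-2\varepsilon)X(S^*)$ non-negative and to make the final arithmetic close cleanly.
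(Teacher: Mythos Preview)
Your derivation of the second inequality from the first is fine and matches the paper. The first inequality, however, has a genuine gap at the LPT balancing step. The inequality $W_i^{LPT}-W_{\min}^{LPT}\leq p_{\tilde{j}_i}$ need not hold under machine capacities: when $\tilde{j}_i$ is placed, the globally least-loaded machine may already be saturated, and your ``fix'' of comparing to the least-loaded \emph{admissible} machine at that moment only yields the tautology $W_i^{LPT}-(W_i^{LPT}-p_{\tilde{j}_i})=p_{\tilde{j}_i}$, since $M_i$ itself is that machine. Concretely, take $N=2$, $m=2$, $d$ large, jobs $100,50,1,1$: LPT gives $W_1=101$, $W_2=51$, $\tilde{j}_1$ has size $1$, and $W_1-W_2=50>1$. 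So the per-machine excess bound $W_i-d<p_{\tilde{j}_i}$ cannot be obtained this way, and the downstream plan of bounding $\sum_{i\in A_{LPT}}p_{\tilde{j}_i}$ by $2\varepsilon X(S^*)$ (which is already vague in the $p_{\max}>\varepsilon d$ branch) never gets off the ground. Also note that bounding $Y(S_{LPT})$ itself by $2\varepsilon X(S^*)$ is hopeless in the presence of huge jobs, so you really need to control $Y(S_{LPT})-Y(S^*)$ directly.

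The paper proceeds quite differently. It first extracts from the hypothesis $X(S^*)<\varepsilon m d$ the structural fact that at most $m-1$ jobs have $p_j\geq\varepsilon d$ (``long'' jobs), and that some machine in $S_{LPT}$ has total load below $\varepsilon d$ (not merely below $d$). It then splits on whether every job starts before $\varepsilon d$ in $S_{LPT}$. If so, long jobs sit on distinct machines and short-job machines finish by $2\varepsilon d<d$, so $S_{LPT}$ is optimal. If not, there is a \emph{saturated} machine $M^*$ with $N$ jobs and total load below $\varepsilon d$; since every job that starts at or after $\varepsilon d$ is placed after $M^*$ filled up, it is no longer than the shortest job on $M^*$, and hence the total processing time of late-starting jobs on any single machine is at most the load of $M^*$, i.e.\ at most $\varepsilon d$. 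This is the capacity-aware replacement for the balancing inequality you tried to use, and from it the paper gets $X(S_{LPT})\geq X(S^*)-g\cdot\varepsilon d$ with $gd(1-\varepsilon)\leq X(S^*)$, closing the argument.
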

\begin{proof}
First, we prove $X(S_{LPT}) \geq (1-2\varepsilon) X(S^*)$, and then we derive from it the second statement of the theorem.
Since $X(S^*) \leq \varepsilon \cdot m \cdot d$, there can be at most $m-1$ jobs of processing time at least $\varepsilon d$.
Since $X(S_{LPT}) \leq X(S^*)$, we can also deduce that in $S_{LPT}$ there is a machine on which the total processing time of the jobs is less than $\varepsilon d$.

First suppose that all jobs start {\em before\/} $\varepsilon d$ in $S_{LPT}$. 
Since there are  $k\leq m-1$ jobs of processing time at least $\varepsilon d$, all these {\em long jobs\/}  start on distinct machines in $S_{LPT}$, since these are the longest $k$ jobs. All the remaining  jobs have a processing time smaller than $\varepsilon d$, and they are scheduled on the remaining $m-k$ machines. Therefore, the work finishes by time $2\varepsilon d$ on the remaining machines. Since $\varepsilon\leq 1/3$, the jobs, if any, that do not finish before $d$ in $S_{LPT}$ must be long jobs. Since the long jobs are scheduled on distinct machines in $S_{LPT}$, there is no way to decrease the late work of this schedule, or equivalently, to increase the early work, thus, $S_{LPT}$ must be optimal for both objectives.

Now suppose there is a job $j$ which starts at or after $\varepsilon d$ in $S_{LPT}$. 
Then there is a machine $M^*$ in $S_{LPT}$ with $N$ jobs and the total processing time of these jobs is smaller  than $\varepsilon d$, otherwise either job $j$ could be scheduled on $M^*$ (which would contradict the rules of the list-scheduling algorithm), or $X(S_{LPT}) \geq \varepsilon \cdot m \cdot d$ (which would contradict the assumption $X(S^*) < \varepsilon \cdot m \cdot d$, since $S_{LPT}$ is a feasible schedule, and $S^*$ is an optimal schedule, thus
$\varepsilon \cdot m \cdot d \leq X(S_{LPT}) \leq X(S^*)$).

We claim that on any machine, the total processing time of those jobs that start at or after $\varepsilon d$ is at most $\varepsilon d$. This is so, because the jobs are scheduled in non-increasing processing time order, and no machine may receive more than $N$ jobs. Consequently, if a job is started at or later than $\varepsilon d$ on some machine, it has  a processing time not greater than the shortest processing time on $M^*$. Hence, the total processing time of the jobs scheduled on $M^*$  is indeed an upper bound on the total processing time of those jobs started at or later than $\varepsilon d$ on any single machine. 

By our claim, if there are only short jobs (of processing time smaller than $\varepsilon d$) on a machine, then the total work assigned to it by $S_{LPT}$ is at most $3\varepsilon d$. Hence, all these jobs finish by $d$, since $\varepsilon \leq 1/3$.
Consequently, if a job finishes after $d$ in $S_{LPT}$, then it must be scheduled on a machine with a long job. Let $g$ be the number of those machines on which some job is late, i.e., finishes after $d$ in $S_{LPT}$. Consider any of these $g$ machines. It has a long job scheduled first, and then some short jobs. The total processing time of these short jobs is at most $\varepsilon d$, since each of them starts after $\varepsilon d$. Hence, the late work can be decreased by at most $g\cdot  \varepsilon d$ by scheduling some of  the short jobs early in a more clever way than in $S_{LPT}$. 
Consequently, $X(S_{LPT}) + g \cdot \varepsilon d \geq X(S^*)$. 

Now, we bound $g d$. As we have observed, if a machine has some late work on it in $S_{LPT}$, then it has a long job, and some short jobs of total processing time at most $\varepsilon d$. Hence, the length of the long job must be at least $d(1-\varepsilon)$. Therefore, $X(S^*) \geq gd(1-\varepsilon)$.
Using this observation, we obtain the first statement: 
\[
X(S_{LPT}) \geq X(S^*) - \varepsilon \cdot gd \geq X(S^*)- \varepsilon X(S^*)/(1-\varepsilon) \geq X(S^*)(1-2\varepsilon),
\]
where the last inequality follows from $\varepsilon / (1-\varepsilon) \leq 2\varepsilon$ if $0 < \varepsilon \leq 1/2$.

Now we derive the second statement of the theorem.
By equation (\ref{eq:lateXY}), $Y(S_{LPT}) = p_{sum} - X(S_{LPT})$.
Hence, we compute
\[
\begin{split}
Y(S_{LPT}) & = p_{sum} - X(S_{LPT}) \leq p_{sum} - X(S^*)(1-2\varepsilon)\\
& = p_{sum} - (p_{sum}-Y(S^*))(1-2\varepsilon) \\
& = p_{sum} - (p_{sum} - 2\varepsilon p_{sum} - Y(S^*) + 2\varepsilon Y(S^*))\\
& \leq Y(S^*) + 2\varepsilon p_{sum}.
\end{split}
\]
To finish the proof, observe that
\[
c\cdot p_{sum} + Y(S_{LPT}) \leq c\cdot p_{sum} + Y(S^*) + 2\varepsilon p_{sum} \leq (1+2\varepsilon/c)(c\cdot p_{sum} + Y(S^*)).
\]
\end{proof}

\subsection{The combined method}
In this section we combine the methods of Section~\ref{sec:earlyXbig} and Section~\ref{sec:earlySmall} to get a PTAS for $P|d_j = d, n_i \leq N|X$.

\begin{proof}[Proof of Theorem~\ref{thm:ptas_ew}]
By Theorems~\ref{thm:earlyXbig} and \ref{thm:earlyXsmall}, the following algorithm is a PTAS for $P|d_j = d, n_i \leq N|X$.

\medskip

\textbf{Algorithm PTAS}
 
Input: problem instance and parameter $0 < \varepsilon \leq 1/3$.
\begin{enumerate}
\item Run Algorithm $\mathrm{A}$ and let $S_{A}$ the best schedule found.
\item Run Algorithm LS with non-increasing processing time order of the jobs, and let $S_{LPT}$ be the schedule obtained. 
\item If $X(S_{A}) \geq X(S_{LPT})$, then output $S_{A'}$, else output $S_{LPT}$.
\end{enumerate}
Since the conditions of Theorems~\ref{thm:earlyXbig} and \ref{thm:earlyXsmall} are complementary, it follows that Algorithm PTAS always outputs a solution of value at least $(1-4\varepsilon)$ times the optimum.
The time complexity in either case is polynomial in the size of the input, hence, the algorithm is indeed a PTAS for our scheduling problem.
\end{proof}

\section{A PTAS for $P|d_j=d,n_i\leq N|c\cdot p_{sum}+Y$}\label{sec:ptas}
In this section we adapt the  PTAS of Section~\ref{sec:earlyPTAS} to the problem $P|d_j=d,n_i\leq N|c\cdot p_{sum}+Y$.
Throughout this section, $S^*$ denotes an optimal solution of a problem instance for the late work objective, and by equation (\ref{eq:lateXY}) for the early work objective as well.
\subsection{The first family of algorithms}
\label{sec:lateXbig}
In this section we describe a family of algorithms $\{\mathcal{A}_\varepsilon\ |\ \varepsilon > 0\}$, such that $\mathcal{A}_\varepsilon$ is a factor $(1+c_0\cdot \varepsilon)$ approximation algorithm  for the problem $P|d_j = d, n_i \leq N|c\cdot p_{sum}+Y$ under the condition $X(S^*) \geq \varepsilon \cdot m \cdot d$, where $c_0$ is a universal constant, independent of $\varepsilon$ and the problem instances.

Recall the definition of huge, big and small jobs from Section~\ref{sec:ptas}, we use the same partitioning of the set of jobs in this section as well.

By Propositions \ref{prop:huge1} and \ref{prop:huge2}, it suffices to consider the case when there are at most $m-1$ huge jobs.
However, in this section we round up the processing time $p_j$ of each big job $j$ to the smallest integer of the form $\lfloor \varepsilon^2 d(1+\varepsilon)^k\rfloor$, where $k\in \mathbb{Z}_{\geq 0}$.
Since $\varepsilon^2 d \leq p_j < d$ for each big job, there are at most $k_1 := \lfloor \log_{1+\varepsilon}1/\varepsilon^2\rfloor+1$ distinct rounded processing times of the big jobs.
Let $\B_1,\B_2,\ldots,\B_{k_1}$ denote the  sets of the big jobs with the same rounded processing times, i.e., $\B_h:=\{j\in \J: p'_j=\lfloor \varepsilon^2 d \cdot (1+\varepsilon)^{h-1}\rfloor\}$ ($\B_h=\emptyset$ is possible).
We also define the assignments of big jobs to machines and the layouts in the same way as in Section~\ref{sec:earlyPTAS}, but using the jobs classes $\B_h$ just defined.

\begin{theorem}
If $X(S^*) \geq \varepsilon \cdot m \cdot d$, then Algorithm $A$ is a factor $(1+4\varepsilon/c)$ approximation algorithm for $P|d_j = d, n_i \leq N| c \cdot p_{sum} + Y$.\label{thm:PTAS:YS*_big}
\end{theorem}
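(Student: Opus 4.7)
The plan is to reduce the theorem to an early-work estimate via the algebraic identity $c\cdot p_{sum}+Y(S)=(c+1)p_{sum}-X(S)$. Specifically, if I can show that Algorithm~$A$ (now using the rounded-up big-job processing times of this section) produces a schedule $S$ with
\[
X(S)\ \geq\ X(S^*)-4\varepsilon\cdot p_{sum},
\]
then
\[
\frac{c\cdot p_{sum}+Y(S)}{c\cdot p_{sum}+Y(S^*)}=\frac{(c+1)p_{sum}-X(S)}{(c+1)p_{sum}-X(S^*)}\ \leq\ 1+\frac{4\varepsilon\cdot p_{sum}}{c\cdot p_{sum}+Y(S^*)}\ \leq\ 1+\frac{4\varepsilon}{c},
\]
which is exactly the claimed approximation factor.

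To establish the displayed inequality I would mirror the proof of Theorem~\ref{thm:earlyXbig}. First I would apply Propositions~\ref{prop:huge1}--\ref{prop:huge2} to reduce to the case that there are at most $m-1$ huge jobs, occupying distinct machines in $S^*$. Next I would build an intermediate schedule $S_{int}$ from $S^*$ by replacing each big-job processing time $p_j$ by the rounded-up value $p'_j$ of this section while keeping the machine assignment and the on-machine job ordering, and I would let $t^*$ denote the layout of big jobs that start before $d$ in $S_{int}$. Because the layout $t^*$ is feasible (its feasibility is defined through the rounded values), Algorithm~$A$ will consider $t^*$ in some iteration and output a corresponding schedule $S$; moreover, since $p_j\leq p'_j$, the big jobs assigned by $t^*$ to a machine actually fit entirely before $d$ when scheduled with their genuine processing times.

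The two key estimates I would prove are: (i) $X(S_{int})\geq X(S^*)$, which is immediate from rounding upward because on every machine $i$ the total $L'_i=\sum_{j\in J^*_i}p'_j$ dominates the original total $L^*_i=\sum_{j\in J^*_i}p_j$, and hence $\min\{d,L'_i\}\geq\min\{d,L^*_i\}$; and (ii) the rounding-up analog of Proposition~\ref{prop:XS_Xint}, namely $X(S)\geq X(S_{int})-3\varepsilon^2\cdot m\cdot d$, handled by the same basic-LP argument when $N<n$, and by the analog of Lemma~\ref{lem:X_N_big} when $N\geq n$. Combining (i), (ii), and the standing hypothesis $X(S^*)\geq\varepsilon\cdot m\cdot d$ gives
\[
X(S)\ \geq\ X(S^*)-3\varepsilon^2\cdot m\cdot d\ \geq\ X(S^*)-3\varepsilon\cdot X(S^*)\ \geq\ X(S^*)-3\varepsilon\cdot p_{sum},
\]
which is even stronger than the bound needed above.

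The main obstacle I expect is the careful re-verification of estimate (ii) in the rounding-up setting. One must check that the idle-time quantities $P^{small}_i$, computed from actual big-job totals, are at least the corresponding slack left by the rounded layout, so every small job Algorithm~$C$ is willing to schedule before $d$ in fact fits; and that the counting of fractional basic-LP variables still yields only an $O(\varepsilon^2\cdot m\cdot d)$ additive loss, since each such small job contributes at most $\varepsilon^2 d$ to the late work. Once this is in place, the rest of the argument is a direct transfer from the proof of Theorem~\ref{thm:earlyXbig}.
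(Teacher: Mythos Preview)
Your overall algebraic reduction via $c\cdot p_{sum}+Y=(c+1)p_{sum}-X$ is perfectly fine, and your target inequality $X(S)\geq X(S^*)-4\varepsilon\,p_{sum}$ is indeed equivalent to the paper's conclusion. Claim~(i), $X(S_{int})\geq X(S^*)$, is also correct. The problem is claim~(ii).

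The ``rounding-up analog of Proposition~\ref{prop:XS_Xint}'' in the form $X(S)\geq X(S_{int})-3\varepsilon^{2}\,m\,d$ is \emph{false}. The parts of the argument you flag as obstacles (feasibility of the $S_{int}$-induced solution for the LP, and the basic-solution fractional loss) are in fact harmless here; what breaks is the final step of the proof of Proposition~\ref{prop:XS_Xint}, where one passes from the small-job estimate $X^{small}_S\geq X^{small}_{int}-3\varepsilon^{2}\,m\,d$ to the full estimate by noting that the big/huge early work in $S$ dominates that in $S_{int}$. With rounding \emph{up}, the big jobs in $S_{int}$ are longer than in $S$, so on each machine $\min\{d,B'_i\}\geq\min\{d,B_i\}$, i.e.\ the big-job early work goes the \emph{wrong} way, and the gap can be as large as $\sum_{j\in\mathcal{B}}(p'_j-p_j)$, which is of order $\varepsilon\,p_{sum}$ and not $\varepsilon^{2}\,m\,d$. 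A tiny instance already shows this: $m=1$, $d=100$, one big job of size $90$ rounded up to $99$, and one small job of size $5$ (say $\varepsilon\approx 0.1$). Then $X(S)=95$, $X(S_{int})=100$, and $3\varepsilon^{2}\,m\,d=3$, so~(ii) fails.

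The paper sidesteps this by working with $Y$ rather than $X$ at the comparison with $S_{int}$: it first shows $Y(S_{int})\leq Y(S^*)+\varepsilon\,p_{sum}$ (Lemma~\ref{lem:S_int-S*}), and then $Y(S)\leq Y(S_{int})+3\varepsilon^{2}\,m\,d$ (Lemmas~\ref{lem:S_int_N-inftty}--\ref{lem:S-S_int}). The second inequality \emph{does} follow from the Proposition~\ref{prop:XS_Xint} argument in the rounding-up setting, because for the big/huge jobs the late work in $S$ (original, smaller sizes) is at most that in $S_{int}$ (rounded-up sizes), so the big-job contribution now points the right way. Your route can be salvaged by inserting exactly this $\varepsilon\,p_{sum}$ correction, i.e.\ proving $X(S)\geq X(S_{int})-3\varepsilon^{2}\,m\,d-(p'_{sum}-p_{sum})\geq X(S_{int})-3\varepsilon^{2}\,m\,d-\varepsilon\,p_{sum}$; combined with~(i) and the hypothesis this still yields $X(S)\geq X(S^*)-4\varepsilon\,p_{sum}$, and your algebra then gives the factor $1+4\varepsilon/c$.
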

\begin{proof}
Let $S_{int}$ be the schedule obtained from $S^*$ by rounding up the processing time of each big job, and shifting the jobs to the right, if necessary, so that the jobs do not overlap on any machines.
Let $t^*$ be the layout of big jobs corresponding to $S_{int}$ (defined as in Section~\ref{sec:earlyPTAS}).
Algorithm $A$ will consider the layout $t^*$ at some iteration, and let $S$ be the schedule created from  $t^*$.
Since $Y(S_{A}) \leq Y(S)$, it suffices to prove that $c\cdot p_{sum}+Y(S) \leq (1+O(\varepsilon))(c\cdot p_{sum} + Y(S^*))$, and this is what we accomplish subsequently.
The claimed approximation factor is proved by a series of three lemmas.
\begin{lemma}\label{lem:S_int-S*}
$c\cdot p_{sum}+Y(S_{int})\leq (1+\varepsilon/c)(c\cdot p_{sum}+Y(S^*))$.
\end{lemma}
\begin{proof}
Observe that the rounding procedure increases the late work by at most $\varepsilon p_{sum}$ (recall that $p_{sum}:= \sum_{j\in\J} p_j$).
Hence, we have
\[
c\cdot p_{sum} + Y(S_{int}) \leq c\cdot p_{sum} + Y(S^ *) + \varepsilon p_{sum} \leq (1+\varepsilon/c) (c \cdot p_{sum} + Y(S^*).
\]
\end{proof}

\begin{lemma}\label{lem:S_int_N-inftty}
If $N\geq n$ and $X(S^*) \geq \varepsilon \cdot m \cdot d$, then $c\cdot p_{sum}+Y(S)\leq (1+2\varepsilon/c)(c\cdot p_{sum}+Y(S^*))$.
\end{lemma}
\begin{proof}
If Algorithm $B$ schedules all the small jobs when creating schedule $S$, then the only jobs finishing after $d$ can be big and huge jobs. Since the set of big and huge jobs that start before $d$ in schedule $S$ contains all the big and huge jobs that start before $d$ in schedule $S_{int}$, we get $Y(S) \leq Y(S_{int})$.

If there is at least one small job that remains unscheduled after Step \ref{alg_a:invoke} of Algorithm $A$, then consider the early work in $S$.
We know that the total processing time on each machine is at least $(1-\varepsilon^2)d$ due to the condition in Step \ref{alg_b:sch} of Algorithm $B$, thus $X(S)\geq (1-\varepsilon^2)d\cdot m$.
On the other hand, $X(S_{int})\leq d\cdot m$ is trivial, thus we have $Y(S)\leq Y(S_{int})+\varepsilon^2 d\cdot m$ due to (\ref{eq:lateXY}). Finally, we have
\[
\begin{split}
c \cdot p_{sum} + Y(S)& \leq c \cdot p_{sum} + Y(S_{int}) +\varepsilon^2 d\cdot m \\
& \leq c \cdot p_{sum} + Y(S_{int}) +\varepsilon X(S^*)\\
& \leq (1+\varepsilon/c)(c\cdot p_{sum}+Y(S^*)) + \varepsilon (p_{sum} - Y(S^*))\\
& \leq (1+2\varepsilon/c)(c\cdot p_{sum}+Y(S^*)),
\end{split}
\]
where the second inequality follows from the assumption  $X(S^*) \geq \varepsilon \cdot m \cdot d$, and the third  from Lemma~\ref{lem:S_int-S*} and equation (\ref{eq:lateXY}).
\end{proof}

\begin{lemma}\label{lem:S-S_int}
If $N<n$ and $X(S^*) \geq \varepsilon \cdot m \cdot d$, then $c\cdot p_{sum}+Y(S)\leq (1+4\varepsilon/c)(c\cdot p_{sum}+Y(S^*))$.
\end{lemma}
\begin{proof}
By Proposition~\ref{prop:XS_Xint} and equation (\ref{eq:lateXY}), we have $Y(S) \leq Y(S_{int}) + 3\varepsilon^2 dm$.
Therefore,
\[
\begin{split}
c\cdot p_{sum} + Y(S) & \leq  c\cdot p_{sum} + Y(S_{int}) + 3\varepsilon^2 dm \\
& \leq c\cdot p_{sum} + Y(S_{int}) + 3\varepsilon X(S^*) \\
& \leq (1+\varepsilon/c)(c\cdot p_{sum} + Y(S^*)) + 3\varepsilon(p_{sum} - Y(S^*))\\
& \leq (1+4\varepsilon/c)(c\cdot p_{sum} + Y(S^*)),
\end{split}
\]
where the second inequality follows from the assumption  $X(S^*) \geq \varepsilon \cdot m \cdot d$, and the third  from Lemma~\ref{lem:S_int-S*} and equation (\ref{eq:lateXY}).
\end{proof}
Now we can finish the proof of Theorem~\ref{thm:PTAS:YS*_big}.
We have proved that Algorithm $A$ creates a feasible schedule $S_A$ (Proposition \ref{prop:feas}) in polynomial time (Proposition \ref{prop:run_time_A}) such that $c\cdot p_{sum} + Y(S_A) \leq (1+4\varepsilon / c) (c \cdot p_{sum} + Y(S^*))$ (Lemmas \ref{lem:S_int-S*},  \ref{lem:S_int_N-inftty}, and \ref{lem:S-S_int}), thus the theorem is proved.
\end{proof}

\subsection{The combined method}
In this section we show how to combine the methods of Section~\ref{sec:earlySmall} and Section~\ref{sec:lateXbig} to get a PTAS for $P|d_j=d,n_i \leq N|c \cdot p_{sum} + Y$.
\begin{proof}[Proof of Theorem \ref{thm:ptas}]
By Theorems~\ref{thm:PTAS:YS*_big} and \ref{thm:earlyXsmall}, we propose the following algorithm for $P|d_j=d,n_i \leq N|c \cdot p_{sum} + Y$.

\medskip

\textbf{Algorithm PTAS}

Input: problem instance and parameter $\varepsilon \leq 1/3$.
\begin{enumerate}
\item Run Algorithm $A$ and let $S_{A}$ the best schedule found.
\item Run list-scheduling with non-increasing processing time order of the jobs, and let $S_{LPT}$ be the schedule obtained. 
\item If $Y(S_{A}) \leq Y(S_{LPT})$, then output $S_{A}$, else output $S_{LPT}$.
\end{enumerate}
Since the conditions of Theorems~\ref{thm:PTAS:YS*_big} and \ref{thm:earlyXsmall} are complementary, it follows that Algorithm PTAS always outputs a solution of value at most $(1+4\varepsilon/c)$ times the optimum.
The time complexity in either case is polynomial in the size of the input, hence, the algorithm is indeed a PTAS for our scheduling problem.
\end{proof}

\section{Final remark}

In this paper we have described a common approximation framework for 4 problems which have common roots.
However, there remained a number of open questions. For instance, is there a simple constant factor approximation algorithm for maximizing the early work which runs on arbitrary input, and has a running time suitable for practical applications?
The same question can be asked for the late work minimization problem with the objective $c+Y$ for some positive $c$.

\section{Appendix}

\begin{proof}[Proof of Claim~\ref{claim1}]
The function is injective (different instances of the \lmp~are mapped to different instances of the \rlp), and surjective (for every instance $I'$ of the \rlp~there is an instance $I$ of the \lmp~such that $I$ is mapped to $I'$), thus it is bijective.
\end{proof}

\begin{proof}[Proof of Claim~\ref{claim2}]
Since there are at most $N$ jobs scheduled on a machine in $S$, thus we assign each job to one of the $N$ machines of $I'$.
Furthermore, each job in $I'$ has a unit processing time, hence the jobs do not overlap. 
\end{proof}

\begin{proof}[Proof of Claim~\ref{claim3}]
It is easy to see that the given mapping of schedules is injective. Moreover, let $S'$ be any schedule for $I'$.
We define $S$ for $I$ such that $S$ is mapped to $S'$ as follows.
Suppose job $j$ starts on $M'_\ell$ at time point $i-1$ for some $i \in\{1,\ldots,C\}$ in $S'$, then  $j$ is the $\ell^{th}$ job on $\mu_j(S) = i$. Since in $S'$, there is no idle machine among $M'_1,\ldots,M'_\ell$ by definition, $S$ is feasible, and the value of $t_j(S)$ is well defined.
\end{proof}

\begin{proof}[Proof of Proposition~\ref{prop:XS_Xint}]
Consider Algorithm $C$, when it creates $S$.
It solves (\ref{eq:obj})--(\ref{eq:ip}) and $\bar{x}$ is the optimal basic solution that we get from the algorithm.
Recall that if $i\geq 1$ then $\bar{x}_{i,j}=1$ if and only if job $j$ is assigned to machine $i$ by Algorithm $C$.
We introduce another integer solution $x'$ of (\ref{eq:obj})--(\ref{eq:ip}).
Let $x'_{i,j}:=1$, if a small job $j$ completes before $d$ on machine $i$ in $S_{int}$, otherwise, $x'_{i,j}:=0$.
Note that $x'$ is a feasible solution, because $S_{int}$ is a feasible schedule.

Let $v(x)$ denote the objective function value of a solution $x$ of (\ref{eq:obj})--(\ref{eq:ip}),
$OPT_{IP}$ the optimum value of (\ref{eq:obj})--(\ref{eq:ip}) and $OPT_{LP}$ the optimum value of its linear relaxation.
For any feasible solution $x$ of (\ref{eq:obj})--(\ref{eq:ip}), we have $OPT_{LP}\geq OPT_{IP} \geq v(x)$.
Let $X^{small}_{int}$ denote the early work of the small jobs in $S_{int}$ and $X^{small}_{S}$ the same in $S$.
Observe that $v(x')$, which is the total early work of the small jobs that complete before $d$ in $S_{int}$, is at least $X^{small}_{int}-\varepsilon^2 dm$, because there is at most one small job  on each machine that starts before, and ends after $d$, and recall that each small job is shorter than $\varepsilon^2 d$.
Then
\[\begin{split}
X_{S}^{small} &\geq  v(\lfloor \bar{x}\rfloor) \geq OPT_{LP}-2\varepsilon^2 dm\geq OPT_{IP}-2\varepsilon^2 dm \geq v(x')-2\varepsilon^2 dm \\
& \geq X^{small}_{int}-3 \varepsilon^2 dm.
\end{split}
\]
The first inequality is trivial, while we have already proved the last three inequalities.
It  remained to prove the second inequality, i.e., $v(\lfloor \bar{x}\rfloor) \geq OPT_{LP}-2\varepsilon^2 dm$.
Let $e$ denote the number of the small jobs $j$ with $\bar{x}_{i,j}=1$ for some $i$  ($i=0,\ldots,m$) in Algorithm $C$, and $f:=n-e$ the number of the 'fractionally assigned' small jobs.
Note that for each of these small jobs, we have $i_1\neq i_2$ ($0\leq i_1,i_2\leq m$) such that $\bar{x}_{i_1,j},\bar{x}_{i_2,j}>0$).
Since $\bar{x}$ is a basic solution there are at most $n+2m$ non-zero values among its coordinates.
Hence, we have $e+2f\leq n+2m$, therefore, we have $f\leq 2m$.
To sum up, we have
\begin{align*}
OPT_{LP}=&\sum_{i=1}^m \left(\sum_{j:\bar{x}_{i,j}=1}p_j+\sum_{j \text{ frac.~assigned}}\bar{x}_{i,j}p_j\right)=\\
&v(\lfloor\bar{x}\rfloor)+ \sum_{j \text{ frac.~assigned}}p_j\sum_{i=1}^m\bar{x}_{i,j}\leq\\
&v(\lfloor\bar{x}\rfloor)+2\varepsilon^2m d,
\end{align*}
where the last inequality follows from $f\leq 2m$, from $p_j\leq \varepsilon^2 d$ for each small job $j$, and from $\sum_{i=1}^m\bar{x}_{i,j}\leq 1$.

Finally, observe that $X^{small}_S \geq X^{small}_{int} - 3\varepsilon^2 dm$ implies  $X(S) \geq X(S_{int})-3\varepsilon^2 dm$,  since the set of big and huge jobs that start before $d$ in $S$ contains those of schedule  $S_{int}$.
\end{proof}

\bibliographystyle{apa}
\bibliography{late_work}

\end{document}